\newlength\figureheight
\newlength\figurewidth
\pgfplotsset{every axis/.append style={font=\scriptsize}}
\theoremstyle{theorem}
\newtheorem{theorem}{Theorem}
\newtheorem{lemma}{Lemma}
\newtheorem{proposition}{Proposition}
\newtheorem{remark}{Remark}
\newtheorem{problem}{Problem}
\newcommand{\R}{{\rm  I\kern-2pt R}}
\renewcommand{\Re}{{\rm  I\kern-2pt R}}
\newcommand{\rank}{{\mathrm{rank}}}
\newcommand{\ie}{{\it i.e.}}
\newcommand{\tr}[1]{\mathrm{Tr}\left[#1\right]}
\newcommand{\E}[1]{\mathbb E\left[#1\right]}
\newcommand{\bb}[1]{\textcolor[rgb]{0.00,0.00,0.00}{#1}}
\newcommand{\I}{{\rm I}}
\newcommand{\dd}{{\rm d}}
\newcommand{\T}{^\top}
\newcommand{\OP}{\overline{P}}
\newcommand{\p}[1]{\Pr\left\{#1\right\}}
\newcommand{\ap}{\widehat{\alpha}_{i*}^+}
\newcommand{\an}{\widehat{\alpha}_{(i+1)*}^-}
\newcommand{\apn}{\widehat{\alpha}_{(n-1)*}^+}
\newcommand{\ann}{\widehat{\alpha}_{n*}^-}
\title{On Stochastic Sensor Network Scheduling for Multiple Processes}
\author{Duo Han$^*$, Junfeng Wu$^\dagger$, Yilin Mo$^*$, Lihua Xie$^*$
\thanks{This research is partially funded by the Republic of Singapore National Research Foundation
(NRF) through a grant to the Berkeley Education Alliance for Research in Singapore (BEARS)
for the Singapore-Berkeley Building Efficiency and Sustainability in the Tropics (SinBerBEST)
Program. BEARS has been established by the University of California, Berkeley as a center for
intellectual excellence in research and education in Singapore.}
\thanks{$*$: School of Electrical and Electronic Engineering, Nanyang Technological University, Singapore, 639798. E-mail: {dhanaa,ylmo,elhxie@ntu.edu.sg}.}
%\texttt
\thanks{$\dagger$: ACCESS Linnaeus Center, School of Electrical Engineering, KTH Royal Institute of Technology, Stockholm, Sweden. E-mail: {junfengw@kth.se}.}
}
\begin{document}
%\begin{frontmatter}
\maketitle
\begin{abstract}
We consider the problem of multiple sensor scheduling for remote state estimation of multiple process over a shared link. In this problem, a set of sensors monitor mutually independent dynamical systems in parallel but only one sensor can access the shared channel at each time to transmit the data packet to the estimator. We propose a stochastic event-based sensor scheduling in which each sensor makes transmission decisions based on both channel accessibility and distributed event-triggering conditions. The corresponding minimum mean squared error (MMSE) estimator is explicitly given. Considering information patterns accessed by sensor schedulers, time-based ones can be treated as a special case of the
proposed one. By ultilizing realtime information,
{the proposed schedule outperforms the time-based ones in terms of the estimation quality.} Resorting to solving an Markov decision process (MDP) problem with average cost criterion, we can find optimal parameters for the proposed schedule. As for practical use, a greedy algorithm is devised for parameter design, which has rather low computational complexity. We also provide a method to quantify the performance gap between the schedule optimized via MDP and any other schedules.
\end{abstract}
%\end{frontmatter}

\section{Introduction}\label{section:introduction}
Sensor scheduling is crucial for remote state estimation in cyber-physical systems (CPS). Typically, the main task for sensor scheduling is to improve estimation quality subject to communication constraints \cite{vitus2012efficient,wu2013event,shi2013optimal,liu2014optimal,zhao2014optimal,mo2014infinite,han2015stochastic,shi2016event}. In this paper we focus on the problem of bandwidth constrained sensor scheduling for state estimation. To be specific, the distributed sensors in charge of different monitoring tasks are sharing a common channel for data transmission. At each time instant, only one sensor can access the communication channel. We consider a sensor schedule deciding which sensor is able to access the channel at each time in order to optimize the overall state estimation quality.

The problem of single sensor scheduling subject to limited transmission rate has been studied in  \cite{yang2011deterministic,wu2013event,zhao2014optimal,han2015stochastic}. Generally speaking, if a sensor uses only prior knowledge of systems, an optimal policy for transmission is very likely to transmit the data packets periodically \cite{yang2011deterministic,shi2013optimal,zhao2014optimal}.
This kind of policies are referred to as time-based schedules.
When the sensor makes transmission decisions based on realtime innovations, which generally has better estimation performance then periodic ones if properly designed but induces higher computational complexity, the transmission is likely to be random \cite{wu2013event,han2015stochastic,shi2016event}.
This kind of policies are referred as event-based schedules.
Weerakkody et al. \cite{sean2015} extended \cite{han2015stochastic} by considering multiple sensors monitoring one process, but without any constraint on channel accessibility.

\bb{Compared with the single system case, not enough research
efforts have been put in the case of different sensors
monitoring different systems, which is widely encountered
in practice. A simple motivating example is the
underground petroleum storage using WirelessHART
technology in \cite{song2008wirelesshart}. Underground
salt caverns are often used for crude oil storage.
Brine and crude oil flowing in both directions are measured
by sensors and reported to the
control center through a gateway device. The control
center aims to maintain a certain pressure inside
the caverns. Obviously, each sensor competes with
others for the gateway access to achieve their own goal.
Therefore, a schedule for optimizing an objective function taking the benefits of all sensors into account is desirable. A few preliminary works have considered this case.} Savage and La Scala \cite{savage2009optimal} studied multiple sensor scheduling for a set of scalar Gauss-Markov systems over a finite horizon. They considered the optimality in terms of terminal estimation error covariance. An optimal policy is to schedule all the transmissions in the end of the horizon. However, the terminal error covariance metric is only suitable for finite horizon scheduling problems. To study an infinite horizon scheduling, Shi et al. \cite{shi2012scheduling} adopted a metric of averaged estimation error, studied two multi-dimensional systems over an infinite horizon and proposed an explicit optimal periodic sensor schedule. The conclusion in \cite{shi2012scheduling} benefits from the mutual exclusiveness of two sensors. When the number of sensors is beyond three, closed-form optimal scheduling is formidable to obtain. Another similar problem where one sensor is scheduled to monitor multiple processes was studied in \cite{lin2013scheduling}. The problem is same with the multiple sensor scheduling problem in nature. The authors proposed algorithms to search schedules such that the error covariance of each system is bounded by some constant matrix. Unlike the single-sensor case, multi-sensor event-based scheduling is not fully investigated to push the limit of performance. A preliminary work by Han et al. \cite{han2013event} improved \cite{shi2012scheduling} by designing an event-based schedule which depends on the importance of a sensor's measurements.

In this note we consider an event-based sensor scheduling design for a set of sensors. At each time every sensor makes a transmission decision based on both channel accessibility by sensing the carrier and the importance of its own data by checking some criteria. Only when the channel is accessible and the data is considered as being sufficiently important, the sensor will
transmit the data packet. The sensor scheduling studied in this note is restricted to a kind of stochastic event-based sensor scheduling since it can maintain Gaussianity of estimation process and bypass the nonlinear problem induced by the event-triggering mechanism, e.g., \cite{wu2013event,shi2016event}.
Compared with time-based schedules, the proposed event-based schedule dynamically assigns the communication resource according to the needs of sensors.
From operating principle point of view, the time- and event-based
sensor schedules are analogous to time division multiple access (TDMA) and carrier sense multiple access with collision avoidance (CSMA/CA) medium access control (MAC) in communication networks, respectively.
The main contributions are summarized as follows: (1) We first propose an event-based scheduling infrastructure with network cooperation and self event-triggering mechanism. Any time-based schedules can be treated as a special case of the framework. (2) Based on the underlying schedule, we derive the minimum mean squared error (MMSE) estimator and analyze the communication behaviour of each sensor. (3) We model an Markov decision process (MDP) problem with average cost criterion to seek the optimal parameters for the class of proposed stochastic schedules. (4) For computational simplicity, we also propose a greedy parameter design method. Moreover, we are able to analytically quantify the performance any schedule by showing a lower bound of the optimal cost.

%The remainder of the paper is organized as follows. Section~\ref{section:problem-setup} describes the problem of interest. MMSE estimation under the proposed schedule is investigated in Section \ref{section:estimation}. How to optimally design the parameter is presented in Section \ref{section:parameter_design}. More results on suboptimal schedules are also discussed. Concluding remarks are given in Section \ref{section:conclusion}.
%

\textit{Notations}: $\mathbb Z_+$ is the set of positive integers. $\mathbb{S}_{+}^{n}$ ($\mathbb{S}_{++}^{n}$) is the set of $n$ by $n$ positive semi-definite (definite) matrices. $\mathrm{Tr}[\cdot]$ denotes the trace of a matrix. $X^{1/2}$ denotes the square root of $X\in\mathbb{S}_{+}^{n}$. For functions $f$ with appropriate domains, $f^{0}(X) := X$, and $f^{t}(X) := f\big(f^{t-1}(X)\big)$. $x[i]$ represents the $i$th entry of the vector $x$. For a matrix $X$, $X(i,j)$ represents the entry on the $i$th row and $j$th column of $X$. For $x\in\mathbb R$, $\lfloor x\rfloor$ is the largest integer that is not larger than $x$ and
$\lceil x\rceil$ is the smallest one that is not less than $x$. Denote the set or sequence $\{x_i\}_{i=j}^k=\{x_j,x_{j+1},\ldots,x_k\},~j\leq k$ and if $j>k$ then $\{x_i\}_{i=j}^k=\emptyset$.

\section{Problem Setup} \label{section:problem-setup}
\subsection{System Model}

Consider the following $n$ mutually independent linear time-invariant (LTI) systems, which are monitored by $n$ sensors respectively:
\begin{subequations}
\begin{align}
x_{i}(k+1) & =  {A}_ix_{i}(k) + \omega_i(k),\label{eq:system-dynamics} \\
y_i(k) & = {C}_i x_i(k) + \nu_i(k),~ i\in\mathcal S\label{eq:sensor-dynamics}
\end{align}
\end{subequations}
where $\mathcal S:=\{1,\ldots,n\}$ is the index set of the $n$ processes or sensors,
$x_i(k) \in \mathbb{R}^{n_i}$ is the state of the $i$th process at time $k$ and $y_i(k) \in \mathbb{R}^{m_i}$ is the measurement obtained by the $i$th sensor at time $k$.
For shorthand denote $s_i$ as the $i$th sensor.
The system noise $\omega_i(k)$'s, the measurement noise $\nu_i(k)$'s and the initial system state $x_i(0)$ are mutually independent zero-mean Gaussian random variables with covariances ${Q}_i > 0$, ${R}_i > 0$, and ${\Pi}_i\geq 0$, respectively. Assume that $({A}_i, {C}_i)$ is \bb{detectable}. Furthermore, we assume that ${A}_i$ is unstable like \cite{shi2012scheduling} for two reasons: (1) unstable systems bring out stability issues rather than stable ones do; (2) most process estimation tasks will become unpredictable if left unattended too long.

Each sensor measures its corresponding system state and generates a local estimate first. More specifically, at each time $s_i$ locally  runs a Kalman filter to compute the minimum mean square error (MMSE) estimate of $x_i(k)$, \ie, $\hat{x}_{i,{\rm local}}(k) := \mathbb{E}[ x_i(k)| \{y_i(l)\}_{l=0}^k]$. The corresponding estimation error and error covariance are
\begin{align}
\epsilon_{i,{\rm local}}(k) &:= x_i(k)-\hat{x}_{i,{\rm local}}(k),\label{eq:epsilon_def}\\
    {P}_{i,{\rm local}}(k) &:= \mathbb{E}\left[\epsilon_{i,{\rm local}}(k)(\epsilon_{i,{\rm local}}(k))^{\T}|\{y_i(l)\}_{l=0}^k\right].
\end{align}
The quantities $\hat{x}_{i,{\rm local}}(k)$ and ${P}_{i,{\rm local}}(k)$ can be obtained through a standard Kalman filter \cite{anderson79}. It is well known that ${P}_{i,{\rm local}}(k)$ converges exponentially fast to $\OP_i\in\mathbb S_+^{n_i}$ which is the solution of a discrete algebraic Riccati equation \cite{anderson79}. Since we consider a problem over the infinite horizon in the sequel, we ignore the transient behaviour and make a standing assumption that ${P}_{i,{\rm local}}(k) = \OP_i,\forall k\in \mathbb Z_+$.

For the purpose of state estimation, all sensors transmit their own local estimates to the remote estimator over a shared channel. In this work we consider a bandwidth-limited sensor network which allows one sensor to access the channel each time instant. In other words, \emph{only a single sensor can send its estimate over the shared channel at each time instant while the others still keep their local copies}.

\bb{
\subsection{Scheduling Problem}
The problem of interest is \emph{how to efficiently schedule the transmission of sensors at each time in terms of some performance metric}.}

\bb{
We first define some mathematical notations. Denote the transmission indicator for $s_i$ at time $k$ as a binary variable $\gamma_i(k)$, \ie, $\gamma_i(k)=1$ means $s_i$ sends data and vice versa. A sensor schedule $\theta$ is defined as a sequence of $\gamma_i(k)$, \ie, $$\theta:=\{\gamma_i(k)\}_{i\in\mathcal S,k\in \mathbb Z_+}.$$
Moreover, we define a collection of the information sets $\mathcal I_{i,\theta}(k)$'s for $i\in\mathcal S$ at the estimator side as
\begin{align}
\mathcal I_{i,\theta}(k) :=
\{\gamma_i(0)\hat{x}_{i,{\rm local}}(0), \ldots, \gamma_i(k)\hat{x}_{i,{\rm local}}(k)\}. \label{eq:25}
\end{align}
Due to the mutual independency among the systems, the estimator computes the estimate of $x_i(k)$ as $\hat x_i(k):=\mathbb{E}[ x_i(k)|\mathcal I_{i,\theta}(k)]$ with the corresponding error covariance $P_i(k):= \mathbb{E}[(x_i(k)-\hat{x}_i(k))( x_i(k)-\hat{x}_i(k))^{\T}|\mathcal I_{i,\theta}(k)]$. Note that $\gamma_i(k),\hat x_i(k)$ and $P_i(k)$ are functions of $\theta$ though we do not explicitly show that in the notations.
}

\bb{
Similar to \cite{shi2012scheduling}, we use the overall average expected estimation error covariance as a performance metric, \ie,
\begin{align}
J(\theta) := \limsup_{T\rightarrow\infty} \frac{1}{T}\sum_{i\in\mathcal S} \tr{\sum_{k=0}^{T-1} \left(\E{P_i(k)}\right)}.\label{eq:cost-function-total}
\end{align}
Let $J_i(\theta)$ denote the individual expected estimation error covariance correspondingly. The problem of interest is formally stated as:
\begin{align}
& \underset{\theta}{\textit{minimize}}
& &J(\theta)
& \textit{subject to} &&\sum_{i\in\mathcal S}\gamma_i(k) = 1.
\end{align}
}
% where the cost function $J_i(\theta)$ over an infinite horizon for estimating the $i$th process is defined as follows:
%\begin{equation}\label{eq:cost-function-definition}
%J_i(\theta) := \limsup_{T\rightarrow\infty} \frac{1}{T}\tr{\sum_{k=0}^{T-1} \left(\E{P_i(k)}\right)}.
%\end{equation}
%We are interested in the following optimization problem\footnote{
%Notice that the optimality must be reached at $\sum_{i\in\mathcal S}\gamma_i(k)=1$ since any transmission is always preferred to convey more information.}:
%\begin{problem} \label{problem:main-problem-1}
%\begin{align}
%& \underset{\theta\in\Theta }{\textit{minimize}}
%& &J(\theta)=\sum_{i\in\mathcal S}J_i(\theta), \label{eq:15}\\
%& \textit{subject to} &&\sum_{i\in\mathcal S}\gamma_i(k) \leq 1,~ \forall k\in\mathbb Z_+.\label{eq:1}
%\end{align}
%\end{problem}
\bb{
Time-based schedules refer to that $\theta$ is a function of time only. In that case, $\gamma_i(k)$ is independent of $\hat{x}_{i,{\rm local}}(k)$. The optimal scheduling solution turns out to be a periodic TDMA-like schedule like \cite{shi2012scheduling}. The main advantage of this type of schedule is its simplicity. From \eqref{eq:25}, however, we can see that when $\gamma_i(k)=0$, $\mathcal I_{i,\theta}(k)$ contains no information on $\hat{x}_{i,{\rm local}}(k)$ and thus the estimator gains nothing about the system state from $s_i$ at time $k$ when $\gamma_i(k)=0$.
}

\bb{
To outperform the time-based schedule, we study a type of event-based schedule meaning that $\gamma_i(k)$ is a function of the real-time estimates or measurements which contain information on the underlying system state. Both the sensor and the estimator know how $\gamma_i(k)$ at each time is determined based on some triggering rules. Hence, even with $\gamma_i(k)=0$, the estimator can still infer some information about the system. Therefore, the event-triggering mechanism improves the estimation performance by providing extra information.
}

\subsection{Stochastic Event-based Sensor Schedule}
\setlength{\figureheight}{2.3cm}
\setlength{\figurewidth}{6.5cm}
\begin{figure}
  \centering
  %\inputtikz{realization}
  % Requires \usepackage{graphicx}
  \includegraphics[width=2.7in,height=1in]{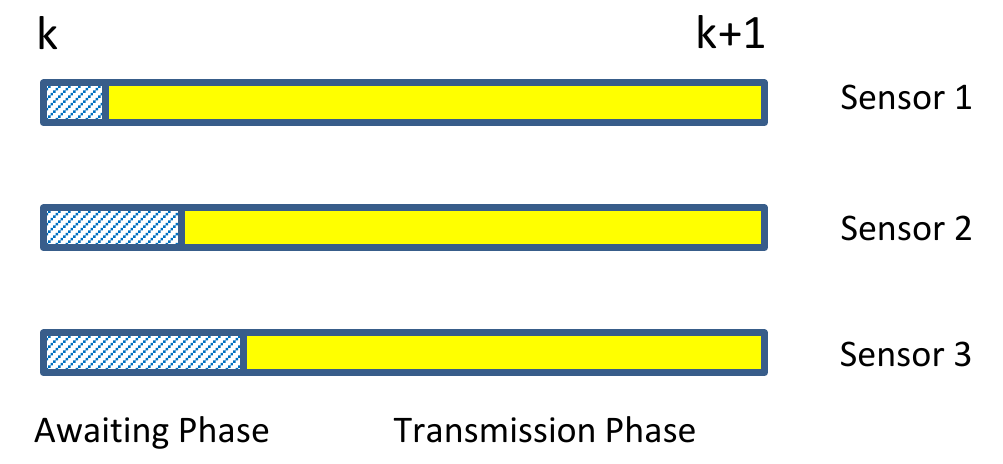}
  \caption{Illustration of the awaiting phase and transmission phase for three sensors between time $k$ and $k+1$.}
  \vspace{-5mm}
  \label{fig:csma}
\end{figure}

Mimicking the protocol of Carrier Sense Multiple Access with Collision Avoidance (CSMA/CA) widely used in wireless sensor networks, we propose a distributed event-based sensor schedule to enhance the overall estimation performance compared with the time-based schedules. Now we introduce the infrastructure. There are two phases for each sensor in each transmission frame: the awaiting phase and the transmission phase. Since the packet transmission time is often much larger than the uncertainties in communication networks such as the propagation delays, it is safe to assume the transmission time is much larger than the awaiting time during each epoch. Any sensor listens to the channel carrier for a short period before it sends anything. If the channel is occupied during the awaiting phase, then the sensor holds its data; otherwise, it sends the packet to the estimator in the transmission phase. The ends of awaiting phase of all sensors are made different in order to avoid collision (see Fig. \ref{fig:csma}). In other words, the sensors form a queue $q(k)$ with different queueing time in the awaiting phase. For example, the queue $(s_1,\ldots,s_{n})$ means $s_i$ has higher priority to access the channel than $s_{i+1}$. The queue $q(k)$ can be time varying or invariant and denote $\mathcal Q$ to be the set of possible priority queues, \ie, all permutations of $s_1,\ldots,s_{n}$.

The idea of event-based scheduling behind is as follows. If the data of $s_i$ contains little innovative information, which can be checked by some criteria introduced later, then $s_i$ will be unlikely to transmit the data and $s_{i+1}$ in the queue can take the chance to use the channel. The queue implemented on the top of carrier sensing is useful here to resolve the conflict when more than one sensor want to transmit. In other words, the transmission decision of $s_i$ depends on the importance of local data and the channel accessibility.

%\begin{remark}
%In the traditional CSMA/CA protocol, the sensors usually wait for a random period of time if they detect the channel is busy. One of the benefits of such random backoff is to get rid of the problem  of sensor synchronization. But at the same time the overall performance quality will decrease since the most important data cannot be guaranteed to transmit. Unlike CSMA/CA, the proposed schedule needs sensor synchronization to form a queue for all sensors. As a reward, the sensory data of more importance will take a higher priority to access the channel, and thus improve the overall estimation quality. Moreover, the sensors behind in the queue also have a chance to transmit if the event of the sensors ahead is not triggered. The proposed schedule can be extended into an asynchronous version which is left as a future work.
%\end{remark}

%\begin{remark}
%Unlike CSMA/CA, the proposed schedule needs sensor synchronization to form a queue for all sensors. As a reward, the sensory data of more importance will take a higher priority to access the channel, and thus improve the overall estimation quality.
%\end{remark}

Let us first define two binary indicators before formally proposing the schedule: the data importance indicator $\eta_i(k)$ and the channel accessibility indicator $\mu_i(k)$. Various event-triggering criteria for determining the importance of a single measurement have been investigated in \cite{wu2013event,han2015stochastic,sean2015}. In the spirit of \cite{han2015stochastic,sean2015}, we design a stochastic event-triggering rule to check the importance of the data. The dominant feature of the stochastic event-triggering mechanism is to maintain the Gaussianity of the estimation process and bypass the nonlinear problem such as \cite{wu2013event}. We define $\eta_i(k)$ based on \textit{the difference between the MMSE estimate under $\theta$ at local sensors and the prediction at the estimator side}, \ie,
$$\epsilon_i(k) := \hat{x}_{i,{\rm local}}(k) - A_i\hat{x}_i(k-1),$$ and the corresponding covariance as
\begin{align}
\Sigma_i(k) := \E{(\epsilon_i(k))\T\epsilon_i(k)|\mathcal I_{i,\theta}(k-1)}.\label{eq:sigma_def}
\end{align}
Define the mapping $\phi_i(z,\Pi):\mathbb R^{n_i}\times \mathbb S^{n_i}_{+}\mapsto [0,1]$ as
\[\phi_i(z,\Pi) := \exp\left(-\frac{1}{2}z\T \Pi^{-1}z\right)\footnote{We abuse the notation $X^{-1}$ to represent the inverse of $X\in \mathbb S^{n_i}_{++}$ or the Moore-Penrose pseudo-inverse of a singular $X \in \mathbb S^{n_i}_{+}$. Similarly, we use $\det(X)$ to denote the determinant of a square matrix $X$ or its pseudo-determinant if $X$ is singular.}.\]
Now we define the data importance indicator $\eta_i(k)$ as follows:
\begin{align}\label{eq:eta}
\eta_i(k) := \left\{
                       \begin{array}{ll}
                         1, & \hbox{if } \xi_i(k) > \phi_i(\epsilon_i(k),\alpha_i(k)\Sigma_i(k)),\\
                         0, & \hbox{otherwise.}
                       \end{array}
                     \right.
\end{align}
\bb{where $\xi_i(k)\sim U[0,1]$ is an i.i.d. auxiliary random variable, and $\alpha_i(k)$ is a tunable parameter which reflects the importance of the data packet. For smaller $\alpha_i(k)$, $\eta_i(k)$ is more likely to be $1$. Namely, for a more important sensor, we would like to pick a smaller $\alpha_i(k)$ to ensure it can transmit more data.}

\bb{The tuning parameter $\{\alpha_i(k)\geq 0\}_{i\in\mathcal S,k\in\mathbb Z_+}$ and the queue parameter $\{q(k)\}_{k\in\mathbb Z_+}$, depending on the historical arrival pattern of all sensors, are designed and broadcasted by the central estimator. The estimator usually has advantageous resources compared with sensors, \ie, stronger computation capability and larger energy storage. The design procedure will be discussed in Section \ref{section:parameter_design}.}

Let $\mu_i(k)$ denote the channel accessibility indicator \footnote{For concise presentation, by relabelling the sensors we assume the queue $q(k)=(s_1,\ldots,s_{n})$ by default. More discussion on the design of the optimal queue is shown in Section \ref{section:parameter_design}.}, \ie,
\begin{align}\label{eq:mu}
\mu_i(k) := \left\{
                       \begin{array}{ll}
                         1, & \hbox{if } \sum_{j=1}^{i-1}\gamma_j(k) = 0\\
                         0, & \hbox{otherwise}
                       \end{array}
                     \right..
\end{align}

Now we are ready to propose an event-based schedule $\theta_+:=\{\gamma_i(k)\}_{i\in\mathcal S,k\in\mathbb Z_+}$ and denote $\Theta_+$ to be the set of all event-based schedules in \eqref{eq:event} subject to all possible $\{\alpha_i(k)\}_{i\in\mathcal S,k\in\mathbb Z_+}$ and $\{q(k)\}_{k\in\mathbb Z_+}$:
\begin{align}\label{eq:event}
\gamma_i(k) := \left\{
                       \begin{array}{ll}
                         \mu_i(k)\eta_i(k), & \hbox{if } i\neq n\\
                         \mu_i(k), & \hbox{otherwise}
                       \end{array}
                     \right..
\end{align}
Notice that $\gamma_i(k)$ depends on $\epsilon_i(k)$ and $\Sigma_i(k)$, and $\epsilon_i(k+1)$ and $\Sigma_i(k+1)$ depends on $\gamma_i(k)$. Thus $\theta_+$ cannot be determined offline. Also note that if $s_n$ with the last priority detects that the channel is idle, it sends the data without checking \eqref{eq:eta}. The condition \eqref{eq:eta} implies that if the prediction error is small, $\eta_i(k)$ is very likely to be $0$ which prevents unecessary transmission.

\bb{In the subsequent sections, we analyze the estimation performance using such a schedule and design the parameters $\{\alpha_i(k)\geq 0\}_{i\in\mathcal S,k\in\mathbb Z_+}$ and the queue parameters $\{q(k)\}_{k\in\mathbb Z_+}$ optimally by formulating optimization problems.}
\section{Optimal Filtering}\label{section:estimation}

\bb{
After proposing the stochastic event-based schedule $\theta_+$, we are keen in deriving the MMSE estimator under $\theta_+$ and  quantifying the estimation performance of $\theta_+$. Moreover, from the viewpoint of communication, we present the transmission probability of each sensor.
}

To facilitate derivations, we define the following operators for $X\in \mathbb S_+^{n_i}$ and $\alpha \in\mathbb R$:
\begin{align}
h_i(X) &:= A_iXA_i\T + Q_i,\\
g_i(X,\alpha) &:= \frac{\alpha}{1+\alpha}(A_iXA_i\T + h_i(\OP_i)-\OP_i),\\
t_i(X,\alpha) &:= \frac{1}{1+\alpha}\OP_i + \frac{\alpha}{1+\alpha}h_i(X).
\end{align}
The following lemma on the properties of the innovation and estimation error is useful for proving the main result whose partial proof can be found in \cite{wu2014stochastic}. Let the incremental innovation for $s_i$ be denoted as
\begin{align}
\delta_i(k):=\hat{x}_{i,{\rm local}}(k)-A_i\hat{x}_{i,{\rm local}}(k-1).\label{eq:delta_def}
\end{align}
\begin{lemma}\label{lemma:1}
Given $\epsilon_{i,{\rm local}}(k),\delta_i(k)$ defined in \eqref{eq:epsilon_def} and \eqref{eq:delta_def}, the following statistical properties hold:
\begin{enumerate}[(i)]
\item $\delta_i(k)$ is zero mean Gaussian distributed, \ie, $\delta_i(k)\sim \mathcal N(0,h_i(\OP_i)-\OP_i)$.
\item $\E{\delta_i(k)(\delta_i(j))\T}=0$ for any $k\neq j$.
\item $\E{\epsilon_{i,{\rm local}}(k)(\delta_i(k_0))\T}=0$ for any $k_0\leq k$.
\item $\E{\hat{x}_{i,{\rm local}}(k)(\epsilon_{i,{\rm local}}(k_1))\T} = 0$ for any $k_1\geq k$ and $\E{\hat{x}_{i,{\rm local}}(k)(\delta_i(k_2))\T} = 0$ for any $k_2> k$.
\end{enumerate}
\end{lemma}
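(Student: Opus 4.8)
The plan is to establish each of the four statistical properties by working within the Kalman filter framework, exploiting the orthogonality principle and the standard structure of the innovation process. First I would set up the well-known Kalman filter recursions for the local estimate. Recall that the one-step predictor is $\hat{x}_{i,{\rm local}}(k|k-1) = A_i\hat{x}_{i,{\rm local}}(k-1)$, so by definition $\delta_i(k) = \hat{x}_{i,{\rm local}}(k) - \hat{x}_{i,{\rm local}}(k|k-1)$ is precisely the correction term in the measurement update. Since the estimate has reached steady state by the standing assumption $P_{i,{\rm local}}(k)=\OP_i$, the filter gain is constant, and $\delta_i(k)$ is a constant linear transformation of the measurement innovation $y_i(k) - C_iA_i\hat{x}_{i,{\rm local}}(k-1)$. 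The key identity I would derive and use repeatedly is the covariance decomposition $\E{\hat{x}_{i,{\rm local}}(k)\hat{x}_{i,{\rm local}}(k)\T} + \OP_i = \E{x_i(k)x_i(k)\T}$, which follows from the orthogonality $\E{\hat{x}_{i,{\rm local}}(k)(\epsilon_{i,{\rm local}}(k))\T}=0$, the special case $k_1=k$ of part (iv).

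For part (i), I would argue that $\delta_i(k)$ is a linear function of jointly Gaussian quantities (the states and measurements through time $k$), hence Gaussian, and zero mean since all driving noises and the initial condition are zero mean. To pin down the covariance as $h_i(\OP_i)-\OP_i$, I would compute $\E{\delta_i(k)(\delta_i(k))\T}$ using the decomposition $\hat{x}_{i,{\rm local}}(k) = \delta_i(k) + A_i\hat{x}_{i,{\rm local}}(k-1)$ together with the orthogonality of $\delta_i(k)$ to the past estimate; expanding $\E{\hat{x}_{i,{\rm local}}(k)\hat{x}_{i,{\rm local}}(k)\T}$ and applying $h_i$ to the steady-state variance identity should collapse the expression to $h_i(\OP_i)-\OP_i$. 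Parts (ii) and (iii) are the orthogonality statements: $\delta_i(k)$ is the new information entering the estimate at time $k$, so by the orthogonality principle it is uncorrelated with everything measurable with respect to the filtration up to time $k-1$ (giving (ii) for $j<k$, and by symmetry all $j\neq k$) and with the current estimation error $\epsilon_{i,{\rm local}}(k)$ which is orthogonal to the whole measurement history (giving (iii) for $k_0\le k$, since $\delta_i(k_0)$ is measurable with respect to measurements up to $k_0\le k$).

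For part (iv), the first claim $\E{\hat{x}_{i,{\rm local}}(k)(\epsilon_{i,{\rm local}}(k_1))\T}=0$ for $k_1\ge k$ follows because $\hat{x}_{i,{\rm local}}(k)$ is a function of measurements up to time $k\le k_1$, while $\epsilon_{i,{\rm local}}(k_1)$ is orthogonal to all measurements up to $k_1$ by the defining property of the MMSE estimate; the second claim about $\delta_i(k_2)$ for $k_2>k$ is the same argument since $\hat{x}_{i,{\rm local}}(k)$ lies in the span of measurements up to $k<k_2$ while $\delta_i(k_2)$ is orthogonal to that span. Throughout, the clean way to organize these is to observe that $\{\delta_i(j)\}$ spans the same information as the innovation sequence, which is white and orthogonal to the past, so everything reduces to membership in the appropriate measurement filtration.

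The main obstacle I anticipate is not conceptual but bookkeeping: carefully justifying the steady-state covariance value in part (i), and being precise about the boundary index in each orthogonality claim (strict versus non-strict inequalities, since $\delta_i(k)$ and $\epsilon_{i,{\rm local}}(k)$ both ``live'' at time $k$ but relate to the filtration differently). Getting the index ranges exactly right in (iii) and (iv) — where $k_0\le k$, $k_1\ge k$, and $k_2>k$ each have a distinct reason — is where care is required, and I would lean on the single unifying fact that $\hat{x}_{i,{\rm local}}(k)$ and $\delta_i(k_0)$ for $k_0\le k$ are measurable with respect to $\{y_i(l)\}_{l=0}^k$ while $\epsilon_{i,{\rm local}}(k_1)$ and $\delta_i(k_2)$ for $k_2>k$ are orthogonal to it.
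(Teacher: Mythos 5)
Your proposal is correct. Note that the paper never actually proves this lemma---it defers to \cite{wu2014stochastic} for a ``partial proof''---and your argument is exactly the standard innovations-based one behind that reference: in steady state $\delta_i(k)$ is a fixed linear transformation of the white measurement innovation, so the zero-mean Gaussianity and whiteness in (i)--(ii), the orthogonality to $\epsilon_{i,{\rm local}}$ in (iii), and the filtration-membership claims in (iv) all follow from the orthogonality principle, with the covariance computation $\E{\delta_i(k)(\delta_i(k))\T} = h_i(\OP_i)-\OP_i$ checking out exactly as you describe (e.g., via $\E{\hat{x}_{i,{\rm local}}(k)(\hat{x}_{i,{\rm local}}(k))\T}+\OP_i=\E{x_i(k)(x_i(k))\T}$ and the state recursion).
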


The following lemma on Bayes' inference is presented before showing the main result.
\begin{lemma}\label{lemma:2}
Suppose $z\in\mathbb R^{{n_i}}$ is a Gaussian random variable with $z\sim \mathcal N(0,Z)$ with
respect to the Lebesgue measure on $\mathcal Z:=\{Z^{1/2}x:x\in\mathbb R^{{n_i}}\}$ and $\xi$ is uniformly distributed over $[0,1]$. The following statements hold:
\begin{enumerate}[(i)]
\item The occurring probability of the following event is $\p{\xi\leq \phi_i(z,\Pi)} =\det(\I+Z\Pi^{-1})^{-1/2}$.
\item The conditional pdf of $z$ is $f(z|\xi\leq \phi_i(z,\Pi))\sim \mathcal N\left(0,(Z^{-1}+\Pi^{-1})^{-1}\right).$
\end{enumerate}
\end{lemma}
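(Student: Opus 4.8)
The plan is to exploit the independence of $\xi$ and $z$ together with the fact that $\phi_i(z,\Pi)\in[0,1]$, which turns both statements into Gaussian integral computations. First I would condition on $z$: since $\xi\sim U[0,1]$ is independent of $z$ and $\phi_i(z,\Pi)\in[0,1]$, one has $\p{\xi\leq \phi_i(z,\Pi)\mid z}=\phi_i(z,\Pi)$. Taking expectations over $z$ then gives $\p{\xi\leq \phi_i(z,\Pi)}=\E{\phi_i(z,\Pi)}=\E{\exp(-\tfrac{1}{2}z\T\Pi^{-1}z)}$, reducing part (i) to evaluating a single Gaussian moment.

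For part (i), assuming first that $Z\in\mathbb S_{++}^{n_i}$, I would write out the $\mathcal N(0,Z)$ density and observe that multiplying $\exp(-\tfrac12 z\T\Pi^{-1}z)$ by the density merges the two quadratic forms into $\exp(-\tfrac12 z\T(Z^{-1}+\Pi^{-1})z)$. Integrating this kernel over $\mathbb R^{n_i}$ contributes $(2\pi)^{n_i/2}\det(Z^{-1}+\Pi^{-1})^{-1/2}$, which after cancelling the $(2\pi)^{n_i/2}\det(Z)^{1/2}$ normalizer of the density yields $\det(Z)^{-1/2}\det(Z^{-1}+\Pi^{-1})^{-1/2}=\det\big(Z(Z^{-1}+\Pi^{-1})\big)^{-1/2}=\det(\I+Z\Pi^{-1})^{-1/2}$, exactly the claimed expression.

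Part (ii) then follows from Bayes' rule almost for free. I would write $f(z\mid \xi\leq \phi_i(z,\Pi))=\phi_i(z,\Pi)f_z(z)/\p{\xi\leq \phi_i(z,\Pi)}$; the numerator is, up to the constant $1/((2\pi)^{n_i/2}\det(Z)^{1/2})$, precisely $\exp(-\tfrac12 z\T(Z^{-1}+\Pi^{-1})z)$, the unnormalized density of $\mathcal N(0,(Z^{-1}+\Pi^{-1})^{-1})$. It then remains only to confirm that the normalizing constant is correct, and the determinant identity $\det(\I+Z\Pi^{-1})=\det(Z)\det(Z^{-1}+\Pi^{-1})$ already established in part (i) guarantees that the denominator supplies exactly the factor needed, giving the claimed Gaussian.

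The main obstacle is the singular case $Z\in\mathbb S_+^{n_i}\setminus\mathbb S_{++}^{n_i}$, which is precisely why the hypothesis insists the density be taken with respect to Lebesgue measure on the subspace $\mathcal Z=\{Z^{1/2}x:x\in\mathbb R^{n_i}\}$ and why $\Pi^{-1}$ and $\det(\cdot)$ are read as pseudo-inverse and pseudo-determinant. To handle this I would restrict all integrals to $\mathcal Z=\mathrm{range}(Z^{1/2})$, where $Z$ acts as an invertible operator, reparametrize via $z=Z^{1/2}x$, and repeat the same quadratic-form completion and determinant algebra on this lower-dimensional space. The determinant identity must then be interpreted in the pseudo-determinant sense restricted to $\mathcal Z$, and verifying that the pseudo-inverse manipulations remain consistent there is the one point demanding genuine care.
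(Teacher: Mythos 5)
Your proof is correct and follows essentially the same route as the paper's: both reduce part (i) to the Gaussian moment $\E{\exp\left(-\tfrac12 z\T\Pi^{-1}z\right)}$ via the independence of $\xi$ and $z$, evaluate it by merging the two quadratic forms and invoking the determinant identity $\det(Z)\det(Z^{-1}+\Pi^{-1})=\det(\I+Z\Pi^{-1})$, and then obtain part (ii) directly from Bayes' theorem. Your explicit treatment of the singular case on the subspace $\mathcal Z$ is just a more careful spelling-out of what the paper does implicitly by integrating over $\mathcal Z$ with $r=\rank(Z)$ under its pseudo-inverse and pseudo-determinant conventions.
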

\begin{proof}
First we have
\begin{align}
&\p{\xi\leq \phi_i(z,\Pi)}  = \E{\exp\left(-\frac{1}{2}z\T \Pi^{-1}z\right)}\notag \\
&=\int_{\mathcal Z} \frac{(2\pi)^{-r/2}}{\det(Z)^{1/2}}\exp\left(-\frac{1}{2}z\T Zz\right)\exp\left(-\frac{1}{2}z\T \Pi^{-1}z\right)\dd z\notag\\
&= \det(\I+Z\Pi^{-1})^{-\frac{1}{2}},\label{eq:24}
\end{align}
where $r=\rank(Z)$. Statement (ii) is a direct result of the Bayes' theorem.
\end{proof}

For notational simplification, denote $
\tilde{\gamma}_{i}(k) := 1-\gamma_i(k)$, $\tilde{\mu}_i(k):=1-\mu_i(k).$
Furthermore, denote the leave duration of $s_i$ as $\tau_i(k):=\min\{k-k_0:\gamma_i(k_0)=1,k_0 \leq k\}$. Now we are ready to present the main result.

\begin{theorem}\label{thm:optimalestimator}
Under the proposed schedule $\theta_+$ given in \eqref{eq:event}, the MMSE state estimate for each process is given by
\begin{align}\label{eq:2}
\hat x_i(k) = \left\{
                \begin{array}{ll}
                  \hat{x}_{i,{\rm local}}(k), & \hbox{if } \gamma_i(k) = 1\\
                  A_i\hat{x}_i(k-1), & \hbox{if } \gamma_i(k) = 0
                \end{array}
              \right.,
\end{align}
and the corresponding estimation error covariance is
\begin{align}\label{eq:3}
P_i(k) = \left\{
           \begin{array}{ll}
             \OP_i, & \hbox{if } \gamma_i(k) = 1 \\
             h_{i}(P_i(k-1)), & \hbox{if } \mu_i(k) = 0 \\
             t_i(P_i(k-1), \alpha_i(k)), & \textrm{otherwise}
           \end{array}
         \right..
\end{align}
Moreover, $\Sigma_i(k)$ in \eqref{eq:sigma_def} is given by
\begin{align}
\Sigma_i(k) = h_i(P_i(k-1))-\OP_i.\label{eq:sigma}
\end{align}
\end{theorem}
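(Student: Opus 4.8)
The plan is to prove the three formulas \eqref{eq:2}, \eqref{eq:3} and \eqref{eq:sigma} simultaneously by induction on $k$, using the mutual independence of the $n$ systems so that the whole argument runs for a fixed index $i$. The induction hypothesis I would carry is that $x_i(k-1)\mid\mathcal I_{i,\theta}(k-1)\sim\mathcal N(\hat x_i(k-1),P_i(k-1))$, i.e. the posterior at time $k-1$ is Gaussian with the claimed mean and covariance. The guiding principle is that the stochastic trigger \eqref{eq:eta} is engineered precisely so that conditioning on the event $\{\eta_i(k)=0\}$ keeps the posterior Gaussian, and Lemma \ref{lemma:2} is the tool that realizes this.

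First I would establish the covariance identity \eqref{eq:sigma}, since it feeds the other two. Writing $e_i(k-1):=\hat x_{i,{\rm local}}(k-1)-\hat x_i(k-1)$ and using \eqref{eq:delta_def}, I decompose $\epsilon_i(k)=\delta_i(k)+A_ie_i(k-1)$. By Lemma \ref{lemma:1} the increment $\delta_i(k)$ has covariance $h_i(\OP_i)-\OP_i$ and is uncorrelated with everything generated up to time $k-1$, in particular with $e_i(k-1)$; combined with the orthogonality $\epsilon_{i,{\rm local}}(k-1)\perp e_i(k-1)$ (again Lemma \ref{lemma:1}) and the decomposition $x_i(k-1)-\hat x_i(k-1)=\epsilon_{i,{\rm local}}(k-1)+e_i(k-1)$, one gets $\E{e_i(k-1)e_i(k-1)\T\mid\mathcal I_{i,\theta}(k-1)}=P_i(k-1)-\OP_i$. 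Substituting and using $h_i(\OP_i)-A_i\OP_iA_i\T=Q_i$ yields $\Sigma_i(k)=h_i(P_i(k-1))-\OP_i$ together with $\E{\epsilon_i(k)\mid\mathcal I_{i,\theta}(k-1)}=0$.

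Then I would split the time-$k$ update into the three regimes of \eqref{eq:3}. If $\gamma_i(k)=1$, the estimator receives $\hat x_{i,{\rm local}}(k)$, a sufficient statistic for $x_i(k)$ given $\{y_i(l)\}_{l=0}^k$, so the past triggering data (coarser functions of the measurements and of the independent $\xi_i$'s) are uninformative and $\hat x_i(k)=\hat x_{i,{\rm local}}(k),\,P_i(k)=\OP_i$. If $\mu_i(k)=0$ the sensor is blocked by a higher-priority one and no data-dependent event is revealed, so the update is a pure open-loop prediction giving $A_i\hat x_i(k-1)$ and $h_i(P_i(k-1))$. The interesting regime is $\mu_i(k)=1,\gamma_i(k)=0$, where the observed event is $\{\xi_i(k)\leq\phi_i(\epsilon_i(k),\alpha_i(k)\Sigma_i(k))\}$. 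Here I use the orthogonal decomposition $x_i(k)-A_i\hat x_i(k-1)=\epsilon_i(k)+\epsilon_{i,{\rm local}}(k)$ with $\epsilon_{i,{\rm local}}(k)\perp\epsilon_i(k)$ of covariance $\OP_i$, note that $\epsilon_{i,{\rm local}}(k)$ is independent of the triggering event, and apply Lemma \ref{lemma:2}(ii) with $Z=\Sigma_i(k),\,\Pi=\alpha_i(k)\Sigma_i(k)$ to conclude $\epsilon_i(k)\mid\{\eta_i(k)=0\}\sim\mathcal N(0,\tfrac{\alpha_i(k)}{1+\alpha_i(k)}\Sigma_i(k))$. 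Since the posterior mean of $\epsilon_i(k)$ stays zero, $\hat x_i(k)=A_i\hat x_i(k-1)$, and adding covariances gives $P_i(k)=\tfrac{\alpha_i(k)}{1+\alpha_i(k)}\Sigma_i(k)+\OP_i$, which collapses to $t_i(P_i(k-1),\alpha_i(k))$ after inserting \eqref{eq:sigma}.

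The step I expect to be the main obstacle — and where the stochastic design earns its keep — is justifying that the posterior stays exactly Gaussian through the event-based update, so that Lemma \ref{lemma:2}(ii) applies and the induction closes. Concretely, I must argue that conditionally on $\mathcal I_{i,\theta}(k-1)$ the pair $(\epsilon_i(k),\epsilon_{i,{\rm local}}(k))$ is jointly Gaussian with the stated independence, and that conditioning on $\{\eta_i(k)=0\}$ — an event whose probability depends on $\epsilon_i(k)$ only through the exponential-quadratic $\phi_i$ — reshapes only the $\epsilon_i(k)$ marginal into another Gaussian while leaving $\epsilon_{i,{\rm local}}(k)$ untouched. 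The careful bookkeeping of the orthogonality relations in Lemma \ref{lemma:1}, together with the independence of the auxiliary $\xi_i(k)$, is what makes this decoupling rigorous; everything else is routine substitution.
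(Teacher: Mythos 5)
Your proposal is correct and follows essentially the same route as the paper's proof: the same decomposition $x_i(k)=A_i\hat{x}_i(k-1)+\epsilon_{i,{\rm local}}(k)+\epsilon_i(k)$, the same recursion for $\epsilon_i(k)$ via $\delta_i(k)$ (your $e_i(k-1)$ bookkeeping is equivalent to the paper's equation \eqref{eq:10} over the leave duration), the orthogonality facts of Lemma \ref{lemma:1}, and the Bayesian update of Lemma \ref{lemma:2}(ii) with $Z=\Sigma_i(k)$, $\Pi=\alpha_i(k)\Sigma_i(k)$ to handle the regime $\mu_i(k)\tilde\gamma_i(k)=1$. Your explicit induction framing and the Gaussianity-preservation caveat you flag at the end are precisely what the paper's recursive argument carries out implicitly.
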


\begin{proof}
When $\gamma_i(k)=1$, from \cite{anderson79} we know that
$\E{x_i(k)|\hat x_{i,{\rm local}}(k),\mathcal I_{i,\theta}(k-1)} = \hat x_{i,{\rm local}}(k),$
and $P_i(k)=\OP_i$.

When $\mu_i(k)=0$ ($\gamma_i(k)$ must be $0$) which implies the channel is occupied by another sensor, the estimator can only do prediction on the estimate of $x_{i}(k)$, \ie
\begin{align}
\hat x_i(k) = A_i\hat{x}_i(k-1),~P_i(k) = h_{i}(P_i(k-1)).\label{eq:9}
\end{align}

The two cases above are easy to analyze. Next we consider the remaining case, \ie, when $\mu_i(k)\tilde{\gamma}_{i}(k)=1$. The following equation is easy to verify and useful for subsequent derivations.
\begin{align}
x_i(k)=A_i\hat{x}_i(k-1)+\epsilon_{i,{\rm local}}(k)+\epsilon_i(k).\label{eq:4}
\end{align}

Without loss of generality, assume that $\gamma_i(k)=1$ or $\mu_i(j)=0$ for any $j\in[k-\tau_i(k),k]$, and $\mu_i(k+1)\tilde{\gamma}_i(k+1)=1$ occurs at time $k+1$. Moreover, the following recursive equation always holds:
\begin{align}
\epsilon_i(k) =A_i\epsilon_i(k-1) +\delta_i(k).\label{eq:10}
\end{align}
Also from \eqref{eq:9} and \eqref{eq:4} we have $x_i(k+1)=A_i^{\tau_i(k+1)}\hat{x}_{i,{\rm local}}(k-\tau_i(k+1))+\epsilon_{i,{\rm local}}(k+1)+\epsilon_i(k+1) = A_i^{\tau_i(k+1)}\hat{x}_{i,{\rm local}}(k-\tau_i(k+1))+\epsilon_{i,{\rm local}}(k+1)+\sum_{j=0}^{\tau_i(k)}A_i^j\delta_i(k+1-j)$,  where the last two terms on the RHS are mutually independent in view of Lemma \ref{lemma:1}(iii). Due to Lemma \ref{lemma:1}(i) and the fact \cite{anderson79} that $
f(\epsilon_{i,{\rm local}}(k+1)|\mathcal I_{i,\theta}(k+1))=f(\epsilon_{i,{\rm local}}(k+1))\sim \mathcal N(0,\OP_i)$, we have $\E{x_i(k+1)|\mathcal I_{i,\theta}(k+1)}=A_i\hat{x}_i(k).$ Therefore, from \eqref{eq:4} we know that for $j\in[k-\tau_i(k),k+1]$ the equality holds:
\begin{align}
P_i(k) = \OP_i + \Lambda_i(k),\label{eq:11}
\end{align}
where $\Lambda_i(k):=\E{(\epsilon_i(k))\T\epsilon_i(k)|\mathcal I_{i,\theta}(k)}$. Hence we have $\Lambda_i(k) = P_i(k)-\OP_i$.
Then together with \eqref{eq:10} we can conclude that
\begin{multline*}
f(\epsilon_i(k+1)|\mathcal I_{i,\theta}(k)) \sim \mathcal N(0, A_i(P_i(k)-\OP_i)A_i^{\top}+h_i(\OP_i)-\OP_i)
\end{multline*}
which proves \eqref{eq:sigma}.
Then from Lemma \ref{lemma:2}, we have
\begin{align}
f(\epsilon_i(k+1)|\mathcal I_{i,\theta}(k+1))\sim  \mathcal N(0,g_i(P_i(k)-\OP_i,\alpha_i(k))).\label{eq:7}
\end{align}
Thus, from \eqref{eq:11} and \eqref{eq:7} we have
\begin{multline}
f(x_i(k+1)|\mathcal I_{i,\theta}(k+1))\\
\sim \mathcal N(A_i\hat{x}_i(k),\OP_i+g_i(P_i(k)-\OP_i,\alpha_i(k))).\label{eq:8}
\end{multline}
Notice that $\hat{x}_i(k+1)$ is actually $A_i^{\tau_i(k+1)}\hat{x}_{i,{\rm local}}(k-\tau_i(k+1))$ which is the same with the predicted estimate but with a smaller error covariance. Consequently, no matter what $\mu_i(k+2)$ and $\gamma_i(k+2)$ are at time $k+2$, the mutual independence of the three terms in \eqref{eq:4} and the recursion in \eqref{eq:10} hold. Therefore, the conditional pdf of $\epsilon_i(k+2)$ can be computed in a similar fashion as \eqref{eq:7}, which is zero mean Gaussian distributed. Thus $\hat{x}_i(k+2)$ is still a predicted estimate, \ie, $A_i\hat{x}_i(k+1)$, with the corresponding error covariance dependent on $\mu_i(k+2)$ and $\gamma_i(k+2)$. Recursively, we can verify \eqref{eq:2} and \eqref{eq:3} which completes the proof.
\end{proof}

The result in \eqref{eq:3} shows that if $s_i$ is idle due to other competitors, \ie, $\mu_i(k)=0$, then the estimator simply runs a prediction. If $s_i$ decides to hold its packet even if it has the access to the channel, \ie, $\mu_i(k)\tilde{\gamma}_i(k)=1$, the estimator updates the estimate using the information encoded by \eqref{eq:eta}. Using the following lemma, we can see the estimation performance under $\mu_i(k)\tilde{\gamma}_i(k)=1$ is lower bounded by that under $\gamma_i(k)=1$ and upper bounded by that under $\mu_i(k)=0$.

\begin{lemma}\label{lemma:supporting-lemma}
The following statements hold for any $i\in\mathcal S$:
\begin{enumerate}
\item[(i).] For any $l_1,l_2\in\mathbb{Z}_+$ with $l_1 <l_2$, $h_i^{l_1}(\OP_i) \leq h_i^{l_2}(\OP_i)$.
\item[(ii).] For any $l \in \mathbb{Z}_+$, $\tr{{\OP_i}} <\tr{h_i(\OP_i)}\leq\cdots \leq \tr{h_i^{l}(\OP_i)}.$
\item[(iii).] For any $X$ which is a convex combination of $\{h_i^j(\OP_i)\}_{j=0}^\infty$ and any $\alpha\geq 0$, $X\leq t_i(X,\alpha)$ and $\tr{t_i^{l_1}(X,\alpha)}\leq\tr{t_i^{l_2}(X,\alpha)}$, for $l_1 <l_2$.
\end{enumerate}
\end{lemma}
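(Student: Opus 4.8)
The plan is to make the order-monotonicity of the Lyapunov-type map $h_i$ the single engine behind all three parts. First I would record that $h_i$ is monotone on $\mathbb S_+^{n_i}$: if $X\le Y$ then $A_i(Y-X)A_i\T\ge 0$, so $h_i(X)\le h_i(Y)$. For (i) it then suffices to establish the one base inequality $\OP_i\le h_i(\OP_i)$ and iterate. I would obtain the base inequality from the Kalman-filter meaning of the operators: $\OP_i$ is the steady-state a posteriori error covariance while $h_i(\OP_i)=A_i\OP_i A_i\T+Q_i$ is the corresponding one-step predicted (a priori) covariance, and incorporating a measurement never increases the covariance, whence $\OP_i\le h_i(\OP_i)$. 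Applying monotonicity repeatedly gives $h_i^{l}(\OP_i)\le h_i^{l+1}(\OP_i)$, and transitivity yields the chain in (i).

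For (ii) the non-strict part $\tr{h_i(\OP_i)}\le\cdots\le\tr{h_i^l(\OP_i)}$ is immediate from (i) and monotonicity of the trace under the Loewner order. The only genuine content is the strict step $\tr{\OP_i}<\tr{h_i(\OP_i)}$. Here I would write the steady-state Riccati relation as $h_i(\OP_i)-\OP_i=MC_i\T(C_iMC_i\T+R_i)^{-1}C_iM$ with $M:=h_i(\OP_i)$, which is positive semidefinite, and whose trace vanishes iff $C_iM=0$. Since $M\ge Q_i>0$ is invertible, $C_iM=0$ forces $C_i=0$, which is impossible because $(A_i,C_i)$ is detectable while $A_i$ is unstable. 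Hence $h_i(\OP_i)-\OP_i\neq 0$ and the trace is strictly positive. I expect this strictness to be the first delicate point, since it is the only place where the structural assumptions (detectability, instability, $Q_i,R_i>0$) are actually used.

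For (iii) the key structural fact is that $h_i$ and $t_i$ preserve the convex hull of $\{h_i^j(\OP_i)\}_{j\ge 0}$: if $X=\sum_j c_j h_i^j(\OP_i)$ with $c_j\ge 0$ and $\sum_j c_j=1$, then since $h_i$ is affine and $\sum_j c_j=1$ one gets $h_i(X)=\sum_j c_j h_i^{j+1}(\OP_i)$, again such a convex combination, and $t_i(X,\alpha)=\frac{1}{1+\alpha}\OP_i+\frac{\alpha}{1+\alpha}h_i(X)$ is a convex combination of $\OP_i=h_i^0(\OP_i)$ and the $h_i^{j+1}(\OP_i)$. From (i), $\OP_i\le h_i^j(\OP_i)$ for every generator, so $\OP_i\le X$; and the term-by-term bound $h_i^j(\OP_i)\le h_i^{j+1}(\OP_i)$ gives $X\le h_i(X)$. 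Because $t_i(X,\alpha)$ is a convex combination of $\OP_i$ and $h_i(X)$ with $\OP_i\le h_i(X)$, it is sandwiched as $\OP_i\le t_i(X,\alpha)\le h_i(X)$, which is precisely the lower/upper bound interpretation stated before the lemma; the trace ordering of the iterates then follows by feeding $t_i(X,\alpha)$, itself such a convex combination, back into these inequalities and using monotonicity of the trace.

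The step I would be most careful about, and which I regard as the main obstacle, is pinning down the inequality \emph{directions} for the $t_i$ operator. Because $t_i(X,\alpha)$ blends $\OP_i$ (which lies below $X$) with $h_i(X)$ (which lies above $X$), a naive one-step comparison of $X$ against $t_i(X,\alpha)$ is not sign-definite for arbitrary $\alpha$, so the monotone behaviour of the iterates $\{t_i^{l}(X,\alpha)\}$ has to be read off from the convex-hull representation and the fixed-point structure of $t_i(\cdot,\alpha)$ rather than from a blanket increment inequality. I would therefore carry the $t_i$ part of (iii) entirely through the sandwich $\OP_i\le t_i(X,\alpha)\le h_i(X)$ and the closure of the convex hull, comparing each iterate with the generators $h_i^j(\OP_i)$, which keeps every comparison inside the Loewner order where monotonicity of $h_i$ and of the trace apply cleanly.
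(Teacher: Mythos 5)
Parts (i) and (ii) of your proposal are correct and take essentially the paper's route: establish the base inequality $\OP_i \le h_i(\OP_i)$, iterate the affine (hence Loewner-monotone) map $h_i$, and obtain strictness in (ii) by showing the gap $h_i(\OP_i)-\OP_i$ is a nonzero positive semidefinite matrix. Your explicit identity $h_i(\OP_i)-\OP_i=MC_i\T(C_iMC_i\T+R_i)^{-1}C_iM$ with $M:=h_i(\OP_i)$, together with the observation that $C_i=0$ contradicts detectability \emph{only because} $A_i$ is unstable, is in fact tighter and more self-contained than the paper's terse contradiction argument (``$Q=0$ or $C_i=0$''), which leaves those implications unproved.

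Part (iii) is where there is a genuine gap, and it is one you half-detected and then papered over. You correctly note that $t_i(X,\alpha)-X=\frac{1}{1+\alpha}(\OP_i-X)+\frac{\alpha}{1+\alpha}\left(h_i(X)-X\right)$ is not sign-definite, and you propose to recover the ordering of the iterates from the sandwich $\OP_i\le t_i(X,\alpha)\le h_i(X)$ and the convex-hull structure instead. That cannot succeed, because the increments of the iterates are rigidly tied to the first one: writing $Y_l:=t_i^l(X,\alpha)$, one has $Y_{l+1}-Y_l=\frac{\alpha}{1+\alpha}A_i(Y_l-Y_{l-1})A_i\T$, hence $Y_{l+1}-Y_l=\left(\frac{\alpha}{1+\alpha}\right)^{l} A_i^{l}\left(t_i(X,\alpha)-X\right)(A_i\T)^{l}$, so every increment is a positive multiple of a congruence of exactly the matrix whose sign you acknowledged is indefinite. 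Indeed both claims in (iii) are false as stated: in the scalar case $h_i(X)=a^2X+q$ with $a^2>1$, $q>0$, take the generator $X=h_i(\OP_i)$ and $\alpha>0$ small; then $t_i(X,\alpha)-X\to\OP_i-h_i(\OP_i)<0$ as $\alpha\to0^+$, so $X\le t_i(X,\alpha)$ fails, all increments are negative, and $\tr{t_i^{1}(X,\alpha)}>\tr{t_i^{2}(X,\alpha)}$, violating the claimed trace ordering already for $l_1=1<l_2=2$. So no completion of your argument (or any argument) for (iii) as written exists. For what it is worth, the paper's own proof of (iii) is vacuous---it says the last statement ``is a direct result of (iii)'', which is circular---and everything the paper subsequently uses is precisely the sandwich $\OP_i\le t_i(P_i(k-1),\alpha_i(k))\le h_i(P_i(k-1))$ together with the fact that $t_i$ maps the convex hull into itself, both of which you do prove. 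The honest fix is to replace (iii) by the sandwich and say explicitly that the lemma's literal inequalities are false, rather than to suggest they can still be read off from the fixed-point structure.
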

\begin{proof}
%Since $\OP_i\geq 0$ and $Q>0$, we have $h_i(\OP_i)>0$. We write the discrete algebraic Riccati equation into the following form which aligns with the information filter \cite{anderson79}:
%\begin{align*}
% \OP_i = \left(h_i(\OP_i)^{-1}+C_i\T R^{-1} C\right)^{-1}.
%\end{align*}
%It is known that for any positive definite matrices $0<X<Y$, then $X^{-1}>Y^{-1}$. Then we have
%\begin{align}
% h_i(\OP_i)-\OP_i &= \left(h_i(\OP_i)^{-1}\right)^{-1} - \left(h_i(\OP_i)^{-1}+C_i\T R^{-1} C\right)^{-1}>0. \label{eq:18}
%\end{align}
%Then by applying the affine operator $h_i(\cdot)$ on both sides we can prove (i). The statements (ii) and (iii) are direct results from (i).
From \cite[Lemma A.1]{shi2010kalman}, we can prove that $\OP_i \leq h_i(\OP_i)$. Since $h_i(\cdot)$ is affine, we can take $h_i(\cdot)$ on both sides and draw the conclusion (i). Next we show the strictiveness in (ii), \ie, $\tr{{\OP_i}} < \tr{h_i(\OP_i)}$. Assuming $\OP_i=h_i(\OP_i)$, then we can find that $Q=0$ which contradicts the assumption $Q>0$, or $C_i=0$ which contradicts the fact $(A_i,C_i)$ is detectable.  Thus $\OP_i\neq h_i(\OP_i)$. From (i) we immediately have $\tr{{\OP_i}} \leq \tr{h_i(\OP_i)}\leq \cdots \leq \tr{h_i^{l}(\OP_i)}.$ Moreover, we know that
\[
\left\{\begin{array}{ll}
         \OP_i\neq h_i(\OP_i)\\
         \OP_i\leq h_i(\OP_i), \OP_i\in\mathbb S_+^{n_i}
       \end{array}\right.
       \Rightarrow \tr{{\OP_i}} \neq\tr{h_i(\OP_i)}.
\]
The last statement is a direct result of (iii).
\end{proof}

From Lemma \ref{lemma:supporting-lemma}(iii), for any realization of $P_i(k-1)$ we know that $\OP_i\leq t_i(P_i(k-1),\alpha_i(k))\leq h_i(P_i(k-1))$. In other words, compared with the time-based schedules, even if $s_i$ does not transmit anything, the estimator is still likely to obtain an estimate better than a pure prediction.

Denote as $r_i(k)$ the rank of $\Sigma_i(k)$ which can be computed offline from \eqref{eq:sigma}.
The probability of kinds of events during transmission can be computed as follows. For concise notation, denote
\begin{align}
\widehat{\alpha}_i(k)&:=\alpha_i(k)/(1+\alpha_i(k))\label{eq:alpha_def}\\
\beta_i(k)&:=\widehat{\alpha}_i(k)^{r_i(k)/2},~\tilde{\beta}_i(k) := 1 - \beta_i(k).\label{eq:beta_def}
\end{align}
\begin{theorem}\label{thm:pr}
The following statements hold true:
\bb{\begin{align*}
&\p{\eta_i(k)=1}=\tilde{\beta}_i(k),\\
&\p{\mu_i(k)=1|\Xi_i(k)}= \left\{
                       \begin{array}{ll}
                         1, & \hbox{if } i=1\\
                         \prod_{j=1}^{i-1}\beta_j(k), & \hbox{if }i>1
                       \end{array}
                     \right.,\\
&\p{\gamma_i(k)=1|\Xi_i(k)}=\left\{
                       \begin{array}{ll}
                         \prod_{j=1}^{n-1}\beta_j(k) , & \hbox{if } i=n\\
                         \tilde{\beta}_i(k)\prod_{j=1}^{i-1}\beta_j(k) , & \hbox{if }i<n
                       \end{array}
                     \right.,
\end{align*}
where $\Xi_i(k):=\{\eta_{i-1}(k),\ldots,\eta_{1}(k)\}.$}
\end{theorem}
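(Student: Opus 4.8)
The plan is to obtain all three identities from a single per‑sensor computation via Lemma~\ref{lemma:2}, the deterministic logical relation linking $\mu_i$, the $\gamma_j$ and the $\eta_j$, and the mutual independence of the $n$ systems. First I would handle $\p{\eta_i(k)=1}=\tilde\beta_i(k)$. By \eqref{eq:eta} the complementary event satisfies $\{\eta_i(k)=0\}=\{\xi_i(k)\le\phi_i(\epsilon_i(k),\alpha_i(k)\Sigma_i(k))\}$, and Theorem~\ref{thm:optimalestimator} (the derivation of \eqref{eq:sigma}) shows that, conditioned on the information present when the rule fires, $\epsilon_i(k)$ is zero‑mean Gaussian with covariance $\Sigma_i(k)$. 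I would then invoke Lemma~\ref{lemma:2}(i) with $z=\epsilon_i(k)$, $Z=\Sigma_i(k)$, $\Pi=\alpha_i(k)\Sigma_i(k)$ to get $\p{\eta_i(k)=0}=\det(\I+\Sigma_i(k)(\alpha_i(k)\Sigma_i(k))^{-1})^{-1/2}$. The crucial step is that $\Sigma_i(k)(\alpha_i(k)\Sigma_i(k))^{-1}=\frac{1}{\alpha_i(k)}\Sigma_i(k)\Sigma_i(k)^{-1}$ is $1/\alpha_i(k)$ times the orthogonal projection onto the range of $\Sigma_i(k)$, whose rank is $r_i(k)$; hence its eigenvalues are $1+1/\alpha_i(k)$ (multiplicity $r_i(k)$) and $1$ (otherwise), so the determinant equals $(1+1/\alpha_i(k))^{r_i(k)}$. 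With \eqref{eq:alpha_def} and \eqref{eq:beta_def} this gives $\p{\eta_i(k)=0}=\widehat{\alpha}_i(k)^{r_i(k)/2}=\beta_i(k)$, and complementation yields $\tilde\beta_i(k)$.

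Next I would prove the logical identity that powers the remaining two claims: for every $i\le n$,
\[
\{\mu_i(k)=1\}=\{\gamma_1(k)=\cdots=\gamma_{i-1}(k)=0\}=\{\eta_1(k)=\cdots=\eta_{i-1}(k)=0\}.
\]
The first equality is \eqref{eq:mu}; the second I would show by induction on $i$. Since $\mu_1(k)\equiv1$ and $\gamma_j(k)=\mu_j(k)\eta_j(k)$ for $j<n$ by \eqref{eq:event}, the hypothesis $\eta_1(k)=\cdots=\eta_{j-1}(k)=0$ forces $\mu_j(k)=1$ and thus $\gamma_j(k)=\eta_j(k)$, so $\gamma_j(k)=0\Leftrightarrow\eta_j(k)=0$; chaining from $j=1$ to $i-1$ collapses the two events (all indices obey $j<i\le n$, so the event‑based branch of \eqref{eq:event} always applies).

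Finally I would bring in independence. Conditioning on $\Xi_i(k)$, understood as the history/covariance information that fixes the ranks $r_1(k),\dots,r_{i-1}(k)$ and hence the constants $\beta_1(k),\dots,\beta_{i-1}(k)$, the events $\{\eta_j(k)=0\}$ for $j=1,\dots,i-1$ become mutually independent, because the $n$ systems are mutually independent and the auxiliaries $\xi_j(k)\sim U[0,1]$ are i.i.d.\ and independent of the innovations. Combined with the first part this gives $\p{\mu_i(k)=1\mid\Xi_i(k)}=\prod_{j=1}^{i-1}\beta_j(k)$ (and $1$ when $i=1$). For the last claim, \eqref{eq:event} gives, when $i<n$, $\{\gamma_i(k)=1\}=\{\mu_i(k)=1\}\cap\{\eta_i(k)=1\}$, and since $\{\eta_i(k)=1\}$ is governed by sensor $i$'s own innovation and $\xi_i(k)$, independence yields $\p{\gamma_i(k)=1\mid\Xi_i(k)}=\tilde\beta_i(k)\prod_{j=1}^{i-1}\beta_j(k)$; when $i=n$, $\gamma_n(k)=\mu_n(k)$ gives $\prod_{j=1}^{n-1}\beta_j(k)$ directly.

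I expect the main obstacle to be twofold. The first, lesser, difficulty is the determinant evaluation when $\Sigma_i(k)$ is singular, which must use the pseudo‑inverse/pseudo‑determinant convention of the footnote so that only the $r_i(k)$ nonzero directions contribute; the eigenvalue bookkeeping above is where care is required. The second, and conceptually the crux, is pinning down the conditioning $\Xi_i(k)$: I must argue that conditioning on the covariance information freezes each $\beta_j(k)$ to a deterministic value and renders the $\eta_j(k)$ conditionally independent across sensors. This cross‑sensor independence — not an unconditional independence, since each rank $r_j(k)$ depends on sensor $j$'s own past blocking events — is precisely where the standing assumption of mutually independent processes, noises, and auxiliary variables is essential.
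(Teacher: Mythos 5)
Your proposal is correct, and its first part --- invoking Lemma~\ref{lemma:2}(i) with $z=\epsilon_i(k)$, $Z=\Sigma_i(k)$, $\Pi=\alpha_i(k)\Sigma_i(k)$ and evaluating the determinant as $(1+1/\alpha_i(k))^{r_i(k)}$ via the projection $\Sigma_i(k)\Sigma_i(k)^{-1}$ --- is exactly the paper's first step, with the pseudo-inverse bookkeeping spelled out in more detail than the paper bothers to. Where you genuinely diverge is in parts (ii) and (iii): the paper sets up a recursion $\p{\mu_i(k)=1|\Xi_i(k)}=1-\sum_{j=1}^{i-1}\p{\gamma_j(k)=1|\Xi_i(k)}$, exploiting that the transmission events of distinct sensors are mutually exclusive, factors $\p{\gamma_i(k)=1|\Xi_i(k)}$ as $\p{\eta_i(k)=1}\p{\mu_i(k)=1|\Xi_i(k)}$, and then telescopes (``calculation omitted for longevity''). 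You instead prove the event identity $\{\mu_i(k)=1\}=\{\eta_1(k)=\cdots=\eta_{i-1}(k)=0\}$ by induction along the queue (valid because every $j<i\le n$ uses the branch $\gamma_j(k)=\mu_j(k)\eta_j(k)$ of \eqref{eq:event}), and then obtain the product directly from conditional independence of the triggering events across sensors. The two arguments rest on the same ingredients --- Lemma~\ref{lemma:2} plus cross-sensor independence of the pairs $(\epsilon_j(k),\xi_j(k))$ --- but yours avoids the telescoping sum and makes explicit the independence step that the paper's factorization uses silently. You also correctly flag and repair a real wrinkle in the statement: conditioning literally on the realized values in $\Xi_i(k)=\{\eta_{i-1}(k),\ldots,\eta_1(k)\}$ would make $\mu_i(k)$ deterministic (it is a function of exactly those indicators), so the stated products can only be probabilities given the arrival history that fixes $\Sigma_j(k)$, $r_j(k)$ and hence $\beta_j(k)$, with the time-$k$ randomness of $(\epsilon_j(k),\xi_j(k))$ averaged over; your reading is the one under which both the theorem and the paper's own recursion are meaningful.
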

\begin{proof}
From Lemma \ref{lemma:2}(i), we have
$\p{\xi_i(k) < \phi_i(\epsilon_i(k),\alpha_i(k)\Sigma_i(k))}
=[\alpha_i(k)/(1+\alpha_i(k))]^{r_i(k)/2}.$
For (ii), it is easy to see that $\p{\mu_1(k)=1|\Xi_i(k)}=0$ for $s_1$. From \eqref{eq:mu} and \eqref{eq:event}, we know that $\p{\mu_i(k)=1|\Xi_i(k)}=1-\sum_{j=1}^{i-1}\p{\gamma_j(k)=1|\Xi_i(k)}$ and $\p{\gamma_i(k)=1|\Xi_i(k)}=(1-\p{\eta_i(k)=0}|\Xi_i(k))\p{\mu_i(k)=1|\Xi_i(k)}$. After some calculation, omitted for longevity, we obtain (ii) and (iii).
\end{proof}
\begin{remark}
Since $\Sigma_i(k)$ in \eqref{eq:sigma} can be written as $\sum_{j\geq1}\kappa_jh_i^j(\OP_i)-\OP_i$ where $\sum_{j\geq1}\kappa_j=1$, we can guarantee $\Sigma_i(k)$ to be full rank as long as $h_i^j(\OP_i)-\OP_i>0,\forall j\geq1$ holds. A sufficient condition is that $A_i$ has full rank.
\end{remark}
\bb{Since we have known how the error covariance of each process is updated under different conditions of $\{\gamma_i(k),\mu_i(k),\eta_i(k)\}$ and the probability of each outcome in Theorem \ref{thm:pr}, we can next investigate how to optimally set the event-triggers for each sensor.}

\section{Parameter Design}\label{section:parameter_design}
\bb{In this section, we aim to design the optimal parameters for the class of stochastic event-based schedules based on the results in Section \ref{section:estimation}. It turns out that finding the optimal schedule is computationally challenging. Thus, we propose a greedy algorithm to achieve suboptimal schedules. Moreover, we analytically quantify the performance gap between the optimal and suboptimal schedules.}

We first show that the proposed schedule performs at least as good as any time-based schedule.

\begin{theorem}\label{thm:opt}
There exists a nonempty subset of $\Theta_+$ from which $\theta_+$ is at least as good as any time-based schedule $\theta_{\ddagger}$ does, \ie, $\{\theta_+\in\Theta_+: J(\theta_+)\leq J(\theta_{\ddagger}),\forall \theta_{\ddagger}\}\neq \emptyset$.
\end{theorem}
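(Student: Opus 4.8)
The plan is to prove that the family of time-based schedules embeds into $\Theta_+$ with the cost functional preserved, so that the event-based counterpart of an optimal time-based schedule serves as a witness for the set in the statement. Let $\theta_{\ddagger}$ be any time-based schedule; since $\sum_{i}\gamma_i(k)=1$, at each time $k$ it designates a single transmitting sensor $i^{*}(k)$. I would realize $\theta_{\ddagger}$ inside $\Theta_+$ by choosing the (time-varying) queue $q(k)$ so that $i^{*}(k)$ occupies the \emph{last} position, and by setting $\alpha_j(k)=0$ for every $j\neq i^{*}(k)$.

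With these parameters the stochastic trigger collapses to a deterministic one. By Lemma~\ref{lemma:2}(i) with $\Pi=\alpha_j(k)\Sigma_j(k)=0$ the pseudo-inverse convention gives $\Sigma_j(k)\Pi^{-1}=0$, so $\p{\eta_j(k)=0}=\det(\I)^{-1/2}=1$; every non-designated sensor thus holds almost surely and never seizes the carrier. Consequently $\mu_{i^{*}(k)}(k)=1$, and because $i^{*}(k)$ sits last, the second branch of \eqref{eq:event} forces $\gamma_{i^{*}(k)}(k)=\mu_{i^{*}(k)}(k)=1$ with probability one. Hence the transmission pattern of the constructed $\theta_+$ coincides, almost surely, with that of $\theta_{\ddagger}$.

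It remains to match the estimation cost, and this is the step I expect to be the main obstacle. The delicate point is the covariance update of a held sensor in the regime $\mu_j(k)\tilde{\gamma}_j(k)=1$ at the degenerate value $\alpha_j(k)=0$: a blind substitution $t_j(\cdot,0)=\OP_j$ is \emph{not} legitimate, since $t_j$ was derived under $\alpha>0$, and it would spuriously suggest that every sensor resets to $\OP_j$. Instead I would recompute the conditional law directly from Lemma~\ref{lemma:2}(ii) at $\Pi=0$, where the triggering event has probability one and is therefore uninformative, yielding $f(\epsilon_j(k)\mid\mathcal I_{j,\theta}(k))\sim\mathcal N(0,\Sigma_j(k))$ and hence, via \eqref{eq:11} and \eqref{eq:sigma}, the pure prediction $P_j(k)=h_j(P_j(k-1))$. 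This is exactly the recursion a time-based schedule induces for a non-scheduled sensor, while the designated sensor resets to $\OP_{i^{*}}$; so the per-time covariances of $\theta_+$ and $\theta_{\ddagger}$ agree and $J(\theta_+)=J(\theta_{\ddagger})$.

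Finally I would invoke that the time-based problem admits an optimal (periodic) schedule $\theta_{\ddagger}^{*}$ attaining $\inf_{\theta_{\ddagger}}J(\theta_{\ddagger})$, as established for this class of averaged-error problems in \cite{shi2012scheduling,savage2009optimal}. Embedding $\theta_{\ddagger}^{*}$ as above produces $\theta_+^{*}\in\Theta_+$ with $J(\theta_+^{*})=J(\theta_{\ddagger}^{*})\leq J(\theta_{\ddagger})$ for every time-based $\theta_{\ddagger}$, so $\theta_+^{*}$ lies in the set of the statement, which is therefore nonempty. Should attainment of the time-based infimum be in doubt, the same embedding already gives $\inf_{\Theta_+}J\le\inf_{\theta_{\ddagger}}J$, and Lemma~\ref{lemma:supporting-lemma}(iii) (a held sensor with $\alpha_j>0$ satisfies $t_j\le h_j$) shows the comparison can only tilt further in favor of $\Theta_+$; but the clean route is the exact deterministic embedding described above.
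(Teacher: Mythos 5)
Your overall strategy---embed each time-based schedule into $\Theta_+$ with cost preserved, then use an optimal periodic time-based schedule as the witness---is the same skeleton as the paper's (one-line) proof, but your parameter choice inverts the paper's convention, and under the paper's own definitions the schedule you construct does not emulate $\theta_{\ddagger}$ at all. You set $\alpha_j(k)=0$ for every non-designated sensor and argue, via a literal Moore--Penrose reading of $\phi_j(z,0)$, that such sensors hold almost surely. This contradicts Theorem~\ref{thm:pr}, which gives $\p{\eta_j(k)=1}=\tilde{\beta}_j(k)=1-\big(\alpha_j(k)/(1+\alpha_j(k))\big)^{r_j(k)/2}=1$ when $\alpha_j(k)=0$, and it contradicts the design rule stated after \eqref{eq:eta} (``for smaller $\alpha_i(k)$, $\eta_i(k)$ is more likely to be $1$''). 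The value $\alpha=0$ must be read as the continuous limit $\alpha\to 0^+$, in which $\phi_j(z,\alpha\Sigma_j(k))\to 0$ for $z\neq 0$ and the trigger fires almost surely; your reading is a discontinuity artifact of evaluating the pseudo-inverse exactly at $\Pi=0$. Under the paper's convention, your construction makes every non-designated sensor eager to transmit, so the first sensor in the queue seizes the channel at every $k$, the designated sensor (which you placed last) never transmits, and the costs do not match.

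The fix is the mirror image of your choice, and is exactly what the paper does: set $\alpha_{i^*(k)}(k)=0$ for the designated sensor (or, keeping your queue trick, place it last so that no trigger check is needed) and $\alpha_j(k)=\infty$ (equivalently $\widehat{\alpha}_j(k)=1$, which the action space $\mathcal U$ of the MDP formulation permits) for all $j\neq i^*(k)$. Then $\beta_j(k)=1$, every non-designated sensor holds almost surely, its silence is uninformative, and its update is $t_j(P_j(k-1),\infty)=h_j(P_j(k-1))$---the pure prediction you wanted---while the designated sensor resets to $\OP_{i^*(k)}$; hence the covariance sequences coincide and $J(\theta_+)=J(\theta_{\ddagger})$, with no need for your recomputation of the conditional law at $\Pi=0$. (That recomputation is also moot under the correct convention: as $\alpha\to 0^+$ non-triggering has probability zero and conditionally pins $\epsilon_j(k)$ to $0$, so $t_j(\cdot,0)=\OP_j$ is the correct Bayesian limit, not a spurious one.) Your final step---invoking an optimal time-based schedule to obtain a single witness for the set---is a reasonable completion of an issue the paper glosses over; note, however, that \cite{shi2012scheduling} establishes existence of the optimal periodic schedule only for two processes, so for general $n$ attainment of the time-based infimum itself still requires an argument.
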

The proof is straightforward since any time-based schedule is a special case of $\theta_+$ in \eqref{eq:event}. For example, if $s_i$ is deterministically scheduled to transmit at time $k$, then one can set $\alpha_i(k)=0$ and $\alpha_j(k)=\infty$ for $j\neq i$. Thus by optimizing the parameters in the schedule $\theta_+$, one can always find a schedule in $\Theta_+$ performs as well as any time-based schedule can.

\subsection{Optimal Parameter Design by Solving an MDP Problem}
Now we are keen on the optimal $\theta_+\in\Theta_+$ of minimizing \eqref{eq:cost-function-total}. By formulating an Markov decision process (MDP) problem with average cost criterion, we can design the optimal time-varying parameters $\{\alpha_i(k)\}_{i\in\mathcal S,k\in\mathbb Z_+}$ and the queue $\{q(k)\}_{k\in\mathbb Z_+}$.

First define the state space $\mathcal V$ to be the set of all possible $(P_1(k),\ldots,P_n(k))$, where $P_i(k)$ belongs to $\mathcal C_i$ which is the set of any convex combination of $h_i^j(\OP_i),\forall j$, \ie, $\mathcal C_i:=\{\sum_j\lambda_jh_i^j(\OP_i)$:$\sum_j\lambda_j=1$\}. The action is a tuple of $b:=(u,q)\in \mathcal U\times \mathcal Q$ where $\mathcal U:=\{u\in \mathbb R^{n}: 0\leq u[i]\leq 1,\forall i\in \mathcal S\}$. The compact action space is denoted as $\mathcal B$, which is identical for each state in $\mathcal V$.  Denote the set $\mathcal K:=\{(v,b):v\in\mathcal V,b\in\mathcal B\}$ which is a Borel space. The transition law $Q(\cdot|v,b)$ with $(v,b)\in \mathcal K$ is a stochastic kernel on $\mathcal V$ given $\mathcal K$, which can be obtained from Theorem \ref{thm:optimalestimator} and Theorem \ref{thm:pr}. The cost function $c(v,b)$ is thus defined as the sum of the trace of each matrix in $v$. We thus model an MDP denoted as $\Omega:=\left(\mathcal V,\mathcal B,Q(\cdot|\cdot,\cdot),c(\cdot,\cdot)\right)$. A decision at time $k$ is a mapping $d(k):\mathcal V\mapsto \mathcal B$.
A policy $\zeta$ for $\Omega$ is a sequence of decision rules.
We define the \emph{average expected cost} of the policy $\zeta$ per unit time by
\begin{align}
g_{\zeta}(v)=\limsup_{T\rightarrow \infty}\frac{1}{T}\mathbb E_{\zeta,v}\left[\sum_{k=0}^{T-1}r(v(k),b(k))\right],\label{eqn:modified_cost}
\end{align}
where $v\in\mathcal V$ is the initial state, and the expectation is taken based on the stochastic process $\{v(k),b(k)\}$ uniquely determined by the policy $\zeta$ and $v$. The target for an MDP problem with average cost criterion is to search an optimal stationary policy $\zeta^*$ for $\Omega$ such that the average cost is minimized, \ie,
$g_{\zeta^*}(v)\leq g_{\zeta}(v),~\forall v\in\mathcal V.$ It is easy to see that finding the optimal policy $\zeta^*$ for $\Omega$ is equivalent to searching the optimal $\theta_+\in\Theta_+$ of minimizing \eqref{eq:cost-function-total}.

\bb{Numerous literature have studied the optimality conditions for a policy of an MDP problem with average cost criterion in Borel spaces such as \cite{hernandez2012further}.
Though some researchers have attempted to solve the MDP problem in Borel spaces like \cite{zhu2005value}, obtaining the optimal cost and the optimal policy in a computationally efficient fashion, however, is generally chanllenging. The technique of discretizing the state space and rebuilding the stochastic kernel from the original model is a popular approach to approximately solve the MDP with Borel spaces. Unfortunately, for solving the discrete MDP, the computational burden is high. For example, the classical policy iteration algorithm for solving an MDP requires the computational effort of $|\mathcal V_d||\mathcal B_d|^2+\frac{1}{3}|\mathcal V_d|$ multiplications/iterations \cite[Chapter 8]{puterman2005markov}, where $|\mathcal V_d|$ represents the cardinality of the discretized state set and $|\mathcal B_d|$ the cardinality of the discretized action set. The number of discretized state space is growing at least exponentially with the number of the sensors in order to keep the same grid size.
}

\subsection{Greedy Algorithm}
The optimal stochastic schedule is formidable to obtain in practice. Instead, we propose a suboptimal greedy schedule which minimizes the next-step trace of total expected error covariance. At each time $k$, the estimator broadcasts the priority queue $q(k)$ and the event-triggering parameters $\{\alpha_i(k)\}_{i\in\mathcal S}$ to all sensors which minimizes the one-step cost function:
\begin{align}
 &J_{greedy}(k)= \tr{\sum_{i\in\mathcal S} \E{P_i(k)}}
% \notag\\
% & = \p{\gamma_i(k)=1}\tr{\OP_i}+\p{\mu_i(k)=0}\tr{h_i(P_i(k-1))}\notag\\
% &~+\p{\mu_i(k)=1}\p{\eta_i(k)=0}\tr{t_i(P_i(k-1), \alpha_i(k))}.
\label{eq:jgreedy}.
\end{align}

Unlike searching the optimal priority queue for the optimal schedule via enumeration, we find a simple rule for the optimal priority queue for the greedy schedule if $r_i(k)=r_j(k),\forall i\neq j$. An example is that the system matrices of several identical vehicles have full rank, \ie, $\rank(A_i)=\rank(A_j)=n_i=n_j,\forall i,j\in\mathcal S$.
\begin{theorem}\label{thm:queue}
 If $r_i(k)=r_j(k)$ for any $i,j\in\mathcal S$, the queue $q(k)= \{s_{\varphi_1},\ldots,s_{\varphi_m},\ldots,s_{\varphi_n}\}$, where $\varphi_m\in\mathcal S$, is optimal for the greedy schedule if and only if for each $m< n$ the following inequality holds:
\begin{multline}
\tr{h_{\varphi_m}(P_{\varphi_m}(k-1))-\OP_{\varphi_m}}\\\geq \tr{h_{\varphi_{m+1}}(P_{\varphi_{m+1}}(k-1))-\OP_{\varphi_{m+1}}}.\label{eq:order_cond}
\end{multline}
\end{theorem}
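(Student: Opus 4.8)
The plan is to turn the minimization of $J_{greedy}(k)$ into a scalar rearrangement problem and settle it by an adjacent-transposition (exchange) argument. First I would assemble $\E{P_i(k)}$ for a sensor occupying position $m$ of the queue. By Theorem \ref{thm:optimalestimator} its covariance equals one of $\OP_i$, $t_i(P_i(k-1),\alpha_i(k))$, or $h_i(P_i(k-1))$ according to whether the sensor transmits, holds with channel access, or is blocked, and Theorem \ref{thm:pr} provides the three probabilities in terms of the hold probabilities $\beta_{\varphi_l}(k)$ of the sensors ahead of it. Writing $\rho_m:=\prod_{l<m}\beta_{\varphi_l}(k)$ for the probability that the channel reaches position $m$, taking traces, and using $\tr{t_i(P_i(k-1),\alpha_i)-\OP_i}=\widehat{\alpha}_i(k)\,\tr{h_i(P_i(k-1))-\OP_i}$, the contribution of the sensor at position $m$ collapses to $\tr{h_{\varphi_m}(P_{\varphi_m}(k-1))}-\rho_m\Delta_{\varphi_m}\psi(\beta_{\varphi_m}(k))$, where $\Delta_i:=\tr{h_i(P_i(k-1))-\OP_i}$ is exactly the quantity in \eqref{eq:order_cond} and $\psi(\beta):=1-\beta^{p}$ with $p:=1+2/r$ (the last slot has no hold option and contributes $\rho_n\Delta_{\varphi_n}$, i.e.\ $\psi\equiv 1$). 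The standing hypothesis $r_i(k)=r_j(k)=:r$ is precisely what makes the exponent $p$, and hence the single scalar profile $\psi$, common to all sensors; this is the crux that lets the optimal ordering depend on the $\Delta_i$ alone.

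Since $\sum_{i\in\mathcal S}\tr{h_i(P_i(k-1))}$ is invariant under the queue and under the $\alpha_i(k)$, minimizing $J_{greedy}(k)$ is equivalent to maximizing the gain $W:=\sum_{m=1}^{n-1}\rho_m\Delta_{\varphi_m}\psi(\beta_{\varphi_m})+\rho_n\Delta_{\varphi_n}$ jointly over the queue $q(k)$ and the hold probabilities $\beta_i(k)\in[0,1]$ (equivalently the $\alpha_i(k)\ge 0$). I would then argue at a global maximizer, which exists by compactness. For two adjacent sensors $a,b$ at positions $m,m+1$, holding $\rho_m$ and $\rho_{m+2}=\rho_m\beta_a\beta_b$ fixed leaves every other position's contribution untouched, so only the intermediate survival probability $\rho_{m+1}\in[\rho_{m+2},\rho_m]$ remains free; let $\Psi(\Delta_{\text{first}},\Delta_{\text{second}})$ denote the optimal value of this two-position block over that single variable. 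The entire theorem then reduces to the scalar comparison
\begin{align*}
\Psi(\Delta_a,\Delta_b)\ge \Psi(\Delta_b,\Delta_a)\quad\Longleftrightarrow\quad \Delta_a\ge \Delta_b,
\end{align*}
with strict inequality whenever $\Delta_a>\Delta_b$.

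Granting this comparison, both directions are immediate. For necessity, if an optimal queue had $\Delta_{\varphi_m}<\Delta_{\varphi_{m+1}}$ for some $m<n$, then at the optimum $\rho_{m+1}$ already maximizes its block (coordinatewise optimality), so the current block value is $\Psi(\Delta_{\varphi_m},\Delta_{\varphi_{m+1}})$; swapping $s_{\varphi_m}$ and $s_{\varphi_{m+1}}$ and re-optimizing $\rho_{m+1}$ yields $\Psi(\Delta_{\varphi_{m+1}},\Delta_{\varphi_m})$, which strictly exceeds it, a contradiction. Hence \eqref{eq:order_cond} holds along any optimal queue. For sufficiency, note that any two $\Delta$-sorted queues differ only by permutations within equal-$\Delta$ blocks, and such adjacent swaps are cost-neutral since $\Psi(\Delta,\Delta)=\Psi(\Delta,\Delta)$; as the global maximizer is itself $\Delta$-sorted by necessity, every queue satisfying \eqref{eq:order_cond} attains the maximum and is optimal.

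The main obstacle is the scalar comparison itself. The inner one-variable problem is the maximization of $\Delta_{\text{first}}f(\rho_{m+1})+\Delta_{\text{second}}g(\rho_{m+1})$, where the first slot's gain factor $f$ is decreasing and the second's $g$ is increasing in $\rho_{m+1}$; its maximizer is either interior with the closed form $\beta^*\propto(\Delta_{\text{second}}/\Delta_{\text{first}})^{1/(p-1)}$ or sits at a corner $\beta\in\{0,1\}$ (that is, $\alpha\in\{0,\infty\}$), so a short case split is unavoidable. I would prove the comparison either by substituting the closed-form optimizer and reducing to a one-dimensional inequality in the ratio $\Delta_b/\Delta_a\in(0,1]$ that vanishes only at ratio $1$, or, more structurally, by noting that $\Psi$ is convex and positively homogeneous in $(\Delta_{\text{first}},\Delta_{\text{second}})$ and applying the envelope theorem, which gives $\Psi(\Delta_a,\Delta_b)-\Psi(\Delta_b,\Delta_a)=(\Delta_a-\Delta_b)\int_0^1 (f-g)(\rho_{m+1}^*(\lambda))\,\dd\lambda$ along the segment interpolating the two weight pairs; the integrand is monotone in $\lambda$ because heavier weight on the first slot drives the optimizer toward smaller $\rho_{m+1}$, and pinning down its sign is the delicate remaining step. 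As a sanity check I would verify the boundary instance $m+1=n$ ($\rho_{m+2}=0$, $g(\rho_{m+1})=\rho_{m+1}$), the cleanest case, where the required strictness is already transparent.
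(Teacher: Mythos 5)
Your reduction of $J_{greedy}(k)$ to the gain $W=\sum_{m<n}\rho_m\Delta_{\varphi_m}\psi(\beta_{\varphi_m})+\rho_n\Delta_{\varphi_n}$ is correct, and it is in fact a clean reparametrization of what the paper does: your device of holding $\rho_m$ and $\rho_{m+2}$ fixed during an adjacent swap is exactly the paper's normalization $\beta_i^+\beta_{i+1}^+=\beta_i^-\beta_{i+1}^-=c$, your per-slot expression matches \eqref{eq:20}--\eqref{eq:21} term by term, your point about equal ranks giving a common exponent $p=1+2/\ell$ is the right reading of the hypothesis, and your derivation of both directions of the ``if and only if'' from the two-position comparison $\Psi(\Delta_a,\Delta_b)\ge\Psi(\Delta_b,\Delta_a)\iff\Delta_a\ge\Delta_b$ (with strictness when $\Delta_a>\Delta_b$) is sound. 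The problem is that this comparison is not a side lemma you may defer: it \emph{is} the theorem, and it is where the paper spends essentially its entire proof. You explicitly leave it open (``pinning down its sign is the delicate remaining step''), so the proposal has a genuine gap at its core.

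Neither of your two suggested routes closes that gap as stated. The closed form $\beta^*\propto(\Delta_{\text{second}}/\Delta_{\text{first}})^{1/(p-1)}$ for the interior optimizer is valid only when $\rho_{m+2}=0$, i.e., for the last pair (the paper's Case II, where indeed $\apn=\frac{\ell}{\ell+2}\lambda$); for a generic adjacent pair the stationarity condition is the paper's \eqref{eq:22}, $(\ell+2)x^{\frac{\ell+4}{2}}-\ell\lambda x^{\frac{\ell+2}{2}}-2\lambda c^{\frac{\ell+2}{\ell}}=0$, which mixes incommensurate powers and admits no closed-form root. This is precisely why the paper has to argue implicitly: uniqueness of the root of $\Phi$, monotonicity $\dd x_*/\dd\lambda\ge 0$, the saturation threshold $\lambda^*$ where $\widehat{\alpha}_{i*}^+=1$, convexity of $g(\lambda)$ via the sign analysis of the four terms \eqref{eq:30a}--\eqref{eq:30d}, and the antisymmetry $g(1/\lambda)=-g(\lambda)/\lambda$. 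Your envelope-theorem route runs into the same wall: with $f(\rho)=\rho_m-\rho^{p}\rho_m^{1-p}$ and $g(\rho)=\rho-\rho_{m+2}^{p}\rho^{1-p}$ one has $f>g$ at $\rho=\rho_{m+2}$ but $f<g$ at $\rho=\rho_m$, so the integrand $(f-g)(\rho^*(\lambda))$ has no a priori sign, and determining its sign at the implicitly defined optimizer is exactly the inequality you set out to prove, not a consequence of convexity or homogeneity. Finally, the corner cases (the optimizer hitting $c^{2/\ell}$ or $1$, and the last pair where the $n$th sensor has no event trigger) require the separate treatment the paper gives them --- the threshold analysis for $\lambda>\lambda^*$ and the Case II argument via Rolle's theorem; your proposal flags the case split but does not carry it out.
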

\begin{proof}
 Note that the condition \eqref{eq:order_cond}, which applies to each adjacent pair in the queue, is a total order on the set $\mathcal S$. Therefore, if we prove ``only if'', then ``if'' is automatically proved. Next we will prove ``only if'' by contradiction.

Assume $q^+(k)$ to be optimal and there exists some adjacent pair in $q^+(k)$, without loss of generality, \ie, $(s_i,s_{i+1})$ with
\begin{align}
\tr{h_{i}(P_{i}(k-1))-\OP_{i}} < \tr{h_{i+1}(P_{i+1}(k-1))-\OP_{i+1}}\label{eq:19}
\end{align}
If we can show that by swapping $s_i$ and $s_{i+1}$ in $q^+(k)$ we can construct a better schedule $q^-(k)$ than $q^+(k)$, we will finish the proof by contradiction. We will discuss two cases: (I) $(s_{i},s_{i+1})$ is not the last pair in $q^+(k)$; (II) $(s_{i},s_{i+1})$ is the last pair.

%\textbf{Case I:}
Denote the quantities $\beta_i(k)$,$\beta_{i+1}(k)$,$\widehat{\alpha}_i(k)$,$\widehat{\alpha}_{i+1}(k)$ defined in \eqref{eq:alpha_def} and \eqref{eq:beta_def} under $q^+(k)~(q^-(k))$ as $\beta_i^+$,$\beta_{i+1}^+$,$\widehat{\alpha}_i^+$,$\widehat{\alpha}_{i+1}^+$$(\beta_i^-$,$\beta_{i+1}^-$,$\widehat{\alpha}_i^-$,$\widehat{\alpha}_{i+1}^-)$. We will omit the time index $k-1$ or $k$ if it is clear from the context.

\textbf{Case I}. Note that the next-step error covariance of sensors before $s_i$ in the queue are not affected by the swapping prodedure. In order not to disturb the next-step error covariance of sensors behind $s_3$, based on Theorem \ref{thm:pr}(iii) we let $\beta_i^+\beta_{i+1}^+=\beta_i^-\beta_{i+1}^-:=c$ where $c\in[0,1]$ is a constant such that the expected error covariances of all sensors except $s_i,s_{i+1}$ under $q^+,q^-$ are the same. From Theorem \ref{thm:optimalestimator} we have $J_{greedy}$ under $q^+,q^-$ are given as
\begin{align}
 &J_{greedy}^+\notag\\
 &= \mathrm{Tr}\left(\OP_i+h_{i+1}(P_{i+1})+(\widehat{\alpha}_i^+)^{\frac{\ell}{2}}\left[\widehat{\alpha}_i^+(h_i(P_i)-\OP_i)\right.\right.\notag\\
 &~~~~~\left.\left.+(c^{\frac{\ell+2}{\ell}}(\widehat{\alpha}_i^+)^{-\frac{\ell+2}{2}}-1)(h_{i+1}(P_{i+1})-\OP_{i+1})\right]\right)\label{eq:20}\\
 &J_{greedy}^- \notag\\
 &= \mathrm{Tr}\left(\OP_{i+1}+h_i(P_i)+(\widehat{\alpha}_{i+1}^-)^{\frac{\ell}{2}}\left[\widehat{\alpha}_{i+1}^-(h_{i+1}(P_{i+1})-\OP_{i+1})\right.\right.\notag\\
 &~~~~~~~\left.\left.+(c^{\frac{\ell+2}{\ell}}(\widehat{\alpha}_{i+1}^-)^{-\frac{\ell+2}{2}}-1)(h_i(P_i)-\OP_i)\right]\right).\label{eq:21}
\end{align}
Denote $\widehat{\alpha}_{i*}^+,\widehat{\alpha}_{(i+1)*}^-$ to be the minimizer of \eqref{eq:20} and \eqref{eq:21} and $J^+_*,J^-_*$ to be the corresponding costs. If we can show that
\begin{align*}
&\Delta J=J^+_*-J^-_*\\
& = \tr{h_i(P_i)-\OP_i}[\lambda(1+c^{\frac{\ell+2}{\ell}}(\widehat{\alpha}_{i*}^+)^{-1} - (\widehat{\alpha}_{i*}^+)^{\frac{\ell}{2}}-(\widehat{\alpha}_{(i+1)*}^-)^{\frac{\ell+2}{2}}) \\
&~~~- (1+c^{\frac{\ell+2}{\ell}}(\widehat{\alpha}_{(i+1)*}^-)^{-1} - (\widehat{\alpha}_{(i+1)*}^-)^{\frac{\ell}{2}} - (\widehat{\alpha}_{i*}^+)^{\frac{\ell+2}{2}})]\\
& \geq 0
\end{align*}
under \eqref{eq:19}, then the optimality of $q^+$ is violated. Next we will show that $\Delta J \leq 0$ for $\lambda<1$ and $\Delta J \geq 0$ for $\lambda>1$ holds.

First we need to know $\widehat{\alpha}_{i*}^+$ and $\widehat{\alpha}_{(i+1)*}^-$. By taking the first derivative of $J_{greedy}^+$ to be $0$, we have an equation
\begin{align}
 \Phi(x):=(\ell+2)x^{\frac{\ell+4}{2}}-\ell\lambda x^{\frac{\ell+2}{2}}-2\lambda c^{\frac{\ell+2}{\ell}}=0,~x>0\label{eq:22}
\end{align}
where $\lambda:=\frac{ \tr{h_{i+1}(P_{i+1})-\OP_{i+1}}}{\tr{h_{i}(P_{i})-\OP_{i}}}.$ By taking first and second derivatives of $\Phi(x)$, we know that $x=\frac{\lambda \ell}{\ell+4}$ is a global minimum on $(0,+\infty)$, $\Phi(x)$ is decreasing on $(0,\frac{\lambda \ell}{\ell+4})$ and $\Phi(x)$ is increasing on $(\frac{\lambda \ell}{\ell+4},+\infty)$. Since $\Phi(0)<0$, there is only one root for $\Phi(x)=0$, denoted by $x_*$. Therefore, we can conclude that $J_{greedy}^+$ is decreasing for $\widehat{\alpha}_i^+\in(0,x_*]$ and increasing for $\widehat{\alpha}_i^+\in(x_*,+\infty)$. From $\beta_i^+\beta_{i+1}^+=\beta_i^-\beta_{i+1}^-=c\in [0,1]$ and $\beta_i^+,\beta_{i+1}^+,\beta_i^-,\beta_{i+1}^-\in [0,1]$, we have $\widehat{\alpha}_{i*}^+ \in [c^{2/\ell},1]$. To sum up,
we have $\widehat{\alpha}_{i*}^+=\min(\max(x_*,c^{2/\ell}),1)$.

Similarly, $\widehat{\alpha}_{(i+1)*}^-=\min(\max(x_+,c^{2/\ell}),1)$ where $x_+$ is the positive solution to the following equation:
\begin{align}
 (\ell+2)x^{\frac{\ell+4}{2}}-\ell\tilde{\lambda} x^{\frac{\ell+2}{2}}-2\tilde{\lambda} c^{\frac{\ell+2}{\ell}}=0,~x>0\label{eq:23}
\end{align}
where $\tilde{\lambda} = \frac{1}{\lambda}$.

From \eqref{eq:22} we have
\begin{align}
\frac{\dd x_*}{\dd \lambda } = \frac{2c^{\frac{\ell+2}{\ell}}+\ell(x_*)^{\frac{\ell+2}{2}}}{(\ell+2)(x_*)^{-1}\left((x_*)^{\frac{\ell+4}{2}}+c^{\frac{\ell+2}{\ell}}\lambda\right)}\geq 0.
\label{eq:30}
\end{align}

Since $x_*$ is a nondecreasing function of $\lambda$, there is $\lambda^*=(\ell+2)/(\ell+2 c^{\frac{\ell+2}{\ell}})>1$ such that $\widehat{\alpha}_{i*}^+=1$ for all $\lambda>\lambda^*$. When $\lambda>\lambda^*$, \ie, $\widehat{\alpha}_{i*}^+=1$, based on \eqref{eq:20}, \eqref{eq:21}, \eqref{eq:23} we can derive that
\begin{align*}
\Delta J &= \tr{h_{2}(P_{2})-\OP_{2}}\frac{\ell+2}{2} (\widehat{\alpha}_{(i+1)*}^-)^{\frac{\ell}{2}}(\lambda \widehat{\alpha}_{(i+1)*}^- -1)^2\\
&\geq 0.
\end{align*}
Similarly, when $\lambda<1/\lambda^*$, $\Delta J \leq 0$.

Now we consider the case of $1/\lambda^*\leq\lambda\leq \lambda^*$. Denote $g(\lambda) = \frac{\Delta J}{\tr{h_i(P_i)-\OP_i}}$ and its first and second derivatives are $g'(\lambda)$ and $g''(\lambda)$.
Then we have
\begin{align}
g'(\lambda) &= 1 + c^{\frac{\ell+2}{\ell}}(\widehat{\alpha}_{i*}^+)^{-1}-(\widehat{\alpha}_{i*}^+)^{\frac{\ell}{2}}-(\widehat{\alpha}_{(i+1)*}^-)^{\frac{\ell+2}{2}} \label{eq:26}\\
g''(\lambda) &= \frac{\ell+2}{2\lambda}\left(\frac{1}{\lambda}(\widehat{\alpha}_{(i+1)*}^-)^{\frac{\ell}{2}}\frac{\dd \widehat{\alpha}_{(i+1)*}^-}{\dd (1/\lambda)}-(\widehat{\alpha}_{i*}^+)^{\frac{\ell}{2}}\frac{\dd\widehat{\alpha}_{i*}^+}{\dd \lambda}\right)\label{eq:29}
\end{align}

If we can prove that $g(\lambda)$ is convex for $1/\lambda^*\leq\lambda<1$, \ie, $g''(\lambda) \geq 0$. Since $g(1/\lambda^*)<0$ and $g(1)=0$, we have that $g(\lambda)<0$ for $1/\lambda^*\leq\lambda<1$. Moreover, if we can prove that $g(1/\lambda) = -g(\lambda)/\lambda$, we have $g(\lambda)>0$ for $1 < \lambda< \lambda^*$. Namely, we can show that $\Delta J=0$ has a unique root at $\lambda=1$. First we prove the convexity.
From \eqref{eq:22} we have
\begin{align*}
\frac{\dd \widehat{\alpha}_{i*}^+}{\dd \lambda } &= \frac{2c^{\frac{\ell+2}{\ell}}+\ell(\widehat{\alpha}_{i*}^+)^{\frac{\ell+2}{2}}}{(\ell+2)(\widehat{\alpha}_{i*}^+)^{-1}\left((\widehat{\alpha}_{i*}^+)^{\frac{\ell+4}{2}}+c^{\frac{\ell+2}{\ell}}\lambda\right)}\\
\frac{\dd \widehat{\alpha}_{(i+1)*}^-}{\dd (1/\lambda) } &= \frac{2c^{\frac{\ell+2}{\ell}}+\ell(\widehat{\alpha}_{(i+1)*}^-)^{\frac{\ell+2}{2}}}{(\ell+2)(\widehat{\alpha}_{(i+1)*}^-)^{-1}\left((\widehat{\alpha}_{(i+1)*}^-)^{\frac{\ell+4}{2}}+c^{\frac{\ell+2}{\ell}}/\lambda\right)}
\end{align*}
Then from \eqref{eq:29} we have
\begin{align}
g''(\lambda) &= \kappa [ 2c^{\frac{\ell+2}{\ell}}(\widehat{\alpha}_{i*}^+\widehat{\alpha}_{(i+1)*}^-)^{\frac{\ell+2}{2}}(\widehat{\alpha}_{i*}^+-\lambda\widehat{\alpha}_{(i+1)*}^-) \label{eq:30a}\\
&~~+2c^{\frac{2\ell+4}{\ell}}(\lambda (\widehat{\alpha}_{(i+1)*}^-)^{\frac{\ell+2}{2}} -(\widehat{\alpha}_{i*}^+)^{\frac{\ell+2}{2}})\label{eq:30b}\\
&~~+\ell (\widehat{\alpha}_{i*}^+\widehat{\alpha}_{(i+1)*}^-)^{\frac{\ell+4}{2}}( (\widehat{\alpha}_{(i+1)*}^-)^{\frac{\ell}{2}} -\lambda(\widehat{\alpha}_{i*}^+)^{\frac{\ell}{2}})\label{eq:30c}\\
&~~+ c^{\frac{\ell+2}{\ell}}\ell (\lambda (\widehat{\alpha}_{(i+1)*}^-)^{\ell+2} -(\widehat{\alpha}_{i*}^+)^{\ell+2})]\label{eq:30d},
\end{align}
where $\kappa$ is some positive term. Now we will show that the terms in \eqref{eq:30a}-\eqref{eq:30d} are all positive.

We first prove $(\widehat{\alpha}_{i*}^+-\lambda\widehat{\alpha}_{(i+1)*}^-)$ in \eqref{eq:30a} is positive. Since there is only one root for \eqref{eq:22}, the fact that
$\widehat{\alpha}_{i*}^+-\lambda\widehat{\alpha}_{(i+1)*}^->0$ is equivalent to that $\Phi(\lambda\widehat{\alpha}_{(i+1)*}^-)<0$. We have that
\begin{align*}
&\Phi(\lambda\widehat{\alpha}_{(i+1)*}^-)\\
& = (\ell+2)(\lambda\widehat{\alpha}_{(i+1)*}^-)^{\frac{\ell+4}{2}}-\ell\lambda (\lambda\widehat{\alpha}_{(i+1)*}^-)^{\frac{\ell+2}{2}}-2\lambda c^{\frac{\ell+2}{\ell}}\\
& = \lambda (\widehat{\alpha}_{(i+1)*}^-)^{\frac{\ell+2}{2}}(\lambda^{\frac{\ell+2}{2}}-\lambda)(\ell+2)\\
&~~~~~~\times\left[ \widehat{\alpha}_{(i+1)*}^- -\frac{(\lambda^{\frac{\ell+2}{2}}-1)\ell}{(\lambda^{\frac{\ell+2}{2}}-\lambda)(\ell+2)}\right].
\end{align*}
We can easily show that $\Lambda(\lambda)=\frac{(\lambda^{\frac{\ell+2}{2}}-1)\ell}{(\lambda^{\frac{\ell+2}{2}}-\lambda)(\ell+2)}$ is a decreasing function of $\lambda$ for $\lambda>0$. Since $\lambda<1$, we have the minimum of $\Lambda(\lambda)$ at $\lambda=1$. By using L'Hospital's rule, we have $\lim_{\lambda\rightarrow 1^-} = 1$. Since $\widehat{\alpha}_{(i+1)*}^-\leq 1$, we have $\Phi(\lambda\widehat{\alpha}_{(i+1)*}^-)<0$ and thus $(\widehat{\alpha}_{i*}^+-\lambda\widehat{\alpha}_{(i+1)*}^-)$ in \eqref{eq:30a} is positive.

Now we prove $(\lambda (\widehat{\alpha}_{(i+1)*}^-)^{\frac{\ell+2}{2}} -(\widehat{\alpha}_{i*}^+)^{\frac{\ell+2}{2}})$ in \eqref{eq:30b} is positive. From \eqref{eq:30} and $\lambda<1$, we have that $\widehat{\alpha}_{(i+1)*}^- > \widehat{\alpha}_{i*}^+$. From \eqref{eq:22} and \eqref{eq:23}, we have
\begin{align*}
&(\ell+2)(\lambda^2 (\an)^{\frac{\ell+4}{2}} - (\ap)^{\frac{\ell+4}{2}})\\
&~~~~~~~~~~=\lambda \ell( (\an)^{\frac{\ell+2}{2}}- (\ap)^{\frac{\ell+2}{2}}) > 0.
\end{align*}
Due to $\lambda^2 (\an)^{\frac{\ell+4}{2}} > (\ap)^{\frac{\ell+4}{2}} >0$ and $\ap>\lambda\an$, we have $\lambda (\an)^{\frac{\ell+2}{2}} > (\ap)^{\frac{\ell+2}{2}}$.

It is easy to see that $(\an)^{\frac{\ell}{2}}>\lambda (\ap)^{\frac{\ell}{2}}$. Due to $\lambda^2 (\an)^{\frac{\ell+4}{2}} > (\ap)^{\frac{\ell+4}{2}} >0$ and $\an>\ap$ and $\lambda<1$, we have $\lambda (\an)^{\ell+2} > (\ap)^{\ell+2}$.

Thus we can show that $g''(\lambda)\geq 0$ and $g(\lambda)$ is convex for $1/\lambda^*\leq\lambda<1$. From \eqref{eq:20}-\eqref{eq:23}, it is easy to show that $g(1/\lambda) = -g(\lambda)/\lambda$. Hence we complete the proof of the fact that $\Delta J=0$ has a unique root at $\lambda=1$.

\textbf{Case II}. The last pair $(s_{n-1}, s_n)$ needs special attention because the last sensor does not check its own data importance. Since the transmission of the last sensor only depends on its preceding sensor, we have $\eqref{eq:20}\eqref{eq:21}$ rewritten into
\begin{align}
 & J_{greedy}^+= \mathrm{Tr}\left( \OP_{n-1}+h_n(P_n)+ (\widehat{\alpha}_{n-1}^+)^{\frac{\ell}{2}} \right.\notag\\
 &~~~\left.\times[\widehat{\alpha}_{n-1}^+(h_{n-1}(P_{n-1})-\OP_{n-1})-(h_{n}(P_{n})-\OP_{n})] \right)\label{eq:31}\\
 & J_{greedy}^-= \mathrm{Tr}\left(\OP_{n}+h_{n-1}(P_{n-1})+ (\widehat{\alpha}_{n}^-)^{\frac{\ell}{2}} \right. \notag\\
 &~~~\left.\times[\widehat{\alpha}_{n}^-(h_{n}(P_{n})-\OP_{n})-(h_{n-1}(P_{n-1})-\OP_{n-1})]\right) \label{eq:32}
\end{align}
In this case, we can see that $c=0$ compared with $\eqref{eq:20}\eqref{eq:21}$.
We can find the minimizer of $J_{greedy}^+$ by taking the first derivative of $J_{greedy}^+$ and we have the minimizer $\apn$ as follows:
\begin{align*}
\begin{cases}
     \apn = 1 & \text{if } \frac{\ell}{\ell+2}\lambda>1, \\
     \apn = \frac{\ell}{\ell+2}\lambda & \text{if } 0 < \frac{\ell}{\ell+2}\lambda\leq 1,
\end{cases}
\end{align*}
where $\lambda = \frac{ \tr{h_{n}(P_{n})-\OP_{n}}}{\tr{h_{n-1}(P_{n-1})-\OP_{n-1}}}$
Similarly, we have the minimizer $\ann$ for $J_{greedy}^-$ as follows:
\begin{align*}
\begin{cases}
     \ann = 1 & \text{if } \frac{\ell}{\ell+2}\frac{1}{\lambda}>1, \\
     \ann = \frac{\ell}{\ell+2}\frac{1}{\lambda} & \text{if } 0 < \frac{\ell}{\ell+2}\frac{1}{\lambda}\leq 1.
\end{cases}
\end{align*}

Since $\apn,\ann$ depend on the value of $\lambda$, we discuss $\Delta J$ for different ranges of $\lambda$ like we did in Case I.

When $\frac{\ell}{\ell+2} \leq \lambda \leq \frac{\ell+2}{\ell}$, we have
\begin{align*}
    \Delta J &= \tr{h_{n-1}(P_{n-1}) - \OP_{n-1}}\\
    &~~~\times \left(\lambda-1-\frac{2}{\ell}\left(\frac{\ell\lambda}{\ell+2}\right)^{\frac{\ell+2}{2}}+\frac{2}{\ell+2}\left(\frac{2}{\ell+2}\frac{1}{\lambda}\right)^{\frac{\ell}{2}}\right).
\end{align*}
With some calculations, we know that there is only one solution to $\dd^2 \Delta J/\dd \lambda^2 = 0.$ Moreover, we have $\dd \Delta J/ \dd \lambda < 0 $ at $\lambda=\frac{\ell}{\ell+2}$ and $\frac{\ell+2}{\ell}$. Then we can conclude that there are two roots or no root for $\dd \Delta J/ \dd \lambda = 0$. If there are two roots $r_1$ and $r_2$, it must be true that $\dd \Delta J/ \dd \lambda<0$ for $\lambda<r_1$ or $\lambda>r_2$, and $\dd \Delta J/ \dd \lambda>0$ for $r_1<\lambda<r_2$.

Since $\Delta J<0$ at $\lambda = \frac{\ell}{\ell+2}$ and $\Delta J>0$ at $\frac{\ell+2}{\ell}$, there must be odd number of roots for $\Delta J=0$. If there is no root for $\frac{\dd \Delta J}{\dd \lambda}=0$, then $\Delta J$ is decreasing function which is impossible. So there are two roots for $\frac{\dd \Delta J}{\dd \lambda}=0$. From the Rolle's theorem, there is only one root for $\Delta J=0$. By inspection we have one root $\lambda=1$ and it is unique based on our previous reasoning. Therefore, we have that
\[
\begin{cases}
    \Delta J>0  & \text{if } \frac{\ell+2}{\ell}\geq \lambda > 1, \\
    \Delta J\leq 0  & \text{if } 1\geq \lambda >\frac{\ell}{\ell+2}.
\end{cases}
\]
So if $\Delta J>0$, we choose $2,1$. Otherwise, we choose $1,2$.

When $\lambda\geq \frac{\ell+2}{\ell}$, we have
\begin{align*}
  \Delta J &=  \tr{h_{n-1}(P_{n-1}) - \OP_{n-1}} \left[\frac{2}{\ell+2}\left(\frac{\ell}{\ell+2}\frac{1}{\lambda}\right)^{\ell/2}\right]\\
 &> 0.
\end{align*}
When $\lambda \leq \frac{\ell}{\ell+2}$, we have 
\begin{align*}
  \Delta J &=  \tr{h_{n-1}(P_{n-1}) - \OP_{n-1}} \left[\frac{2}{\ell+2}\left(-\frac{2}{\ell}\lambda\right)^{\frac{\ell+2}{2}}\right]\\
 &> 0.
\end{align*}

Thus in summary, we can conclude that $\Delta J>0$ if $\lambda>1$ and $\Delta J\leq 0 $ if $\lambda\leq 1$.

Hence the conclusion that $\Delta J<0$ for $\lambda<1$ and $\Delta J>0$ for $\lambda>1$ holds for both Case I and Case II. Since the above proof applies to any adjacent pair in any feasible schedule (no matter it is optimal or not), we can always swap the pair like $s_i,s_{i+1}$ to improve the performance if a condition like \eqref{eq:19} holds. Thus we prove the ``only if'' part by contradiction which completes the proof.
\end{proof}

%Next we will show that $\lambda=1$ is the unique solution to $\Delta J=0$. Assume $\lambda^\dagger>1$ is the largest root of $\Delta J=0$ and correspondingly $1/\lambda^\dagger$ is the least root of $\Delta J=0$. Then we must have that $d\Delta J/d\lambda>0$ at $\lambda=\lambda^\dagger$ and $1/\lambda^\dagger$. From \eqref{eq:20}, \eqref{eq:21}, \eqref{eq:22} and \eqref{eq:23}, we have
%\begin{align*}
% \frac{d\Delta J}{d\lambda}\Big|_{\lambda=\lambda^\dagger} = -\frac{d\Delta J}{d\lambda}\Big|_{\lambda=1/\lambda^\dagger}
%\end{align*}
%which cannot be positive simultaneously, and thus there must exist only one root of $\lambda=1$. To sum up, we can conclude that $\Delta J>0$ if $\lambda>1$ and $\Delta J<0$ if $\lambda<1$.

%\begin{remark}
%If $r_i(k)\neq r_j(k)$, the thresholding rule like \eqref{eq:order_cond} exists but it also depends on $r_i(k), r_j(k)$ of the two adjacent sensors in a complicated manner, \ie, one cannot put all the quantities about $s_i$ on the LHS and all those about $s_j$ on the RHS like \eqref{eq:order_cond}. In that case, the condition is no longer a total order and thus only a necessary condition for the optimal queue can be derived.
%\end{remark}
%\begin{remark}
%If $r_i(k)\neq r_j(k)$, the condition is no longer a total order and thus only a necessary condition for the optimal queue can be derived.
%\end{remark}

\bb{After determining the optimal $q(k)$, we can optimize $J_{greedy}(k)$ in \eqref{eq:jgreedy} in terms of $\{\alpha_i(k)\}_{i\in\mathcal S}$. Based on Theorem \ref{thm:optimalestimator} and \ref{thm:pr}, we can explicitly obtain a cost function which turns out to be a signomial of $\{\alpha_i(k)\}_{i\in\mathcal S}$. The signomial programming (SP) problem is widely studied in communication society \cite{chiang2005geometric}. Though it is not convex, there are many efficient algorithms for local optimum or global optimum (see, e.g., \cite{chiang2005geometric,maranas1997global}). The computational efforts required for different signomial programming problems are illustrated with examples in \cite{maranas1997global}.
}

%In Algorithm \ref{algorithm:1}, we outline the greedy method for suboptimal parameter design.
%
%\begin{algorithm}
%\caption{Greedy Algorithm}
%\label{algorithm:1}
%\begin{algorithmic}[1]
%\State {\bf Initialization.}
%\State {\bf Repeat}
%\begin{enumerate}
%\item If $r_i(k)=r_j(k)$ for any $i,j\in\mathcal S$, determine the optimal priority queue based on Theorem \ref{thm:queue}. Otherwise, repeat 2),3) by enumerating all possible $q(k)$'s.
%\item Compute the optimal $\{\alpha_i(k)\}_{i\in\mathcal S}$.
%\item If the optimal priority queue has been obtained based on Theorem \ref{thm:queue}, jump to 5). Otherwise, jump to 2) until all queues are enumerated.
%\item Pick up the optimal $\{\alpha_i(k)\}_{i\in\mathcal S}$ and $q(k)$ such that $J_{greedy}(k)$ is minimized.
%\item Broadcast the parameters and receive sensory data at time $k$. Update $P_i(k),~\forall i\in\mathcal S$.
%\end{enumerate}
%\end{algorithmic}
%\end{algorithm}

\subsection{Optimality Gap}
Notice that the performance gap between the suboptimal schedules and the optimal $\theta_+\in\Theta_+$ is upperbounded by the gap between the suboptimal performance and a lower bound for the optimal performance. Next we discuss the upper bound of such performance gap to quantify the performance of any suboptimal stochastic schedule.

First we construct an artificial schedule whose performance is better than the optimal $\theta_+\in\Theta_+$. We relax the original constraint $\sum_i \gamma_i(k)=1$ by requiring the sum of the rates of all sensors to be less than $1$. In other words, we allow that channel can be used by multiple sensors but their sum communication rate must be less than $1$. Denote $\Theta_\ddagger$ as the set of all schedules in \eqref{eq:event} with forcing $\mu_i(k)$ to be $1$ for $i\leq n-1$ and all $k$. In other words, any absence of $\hat{x}_{i,{\rm local}}(k)$ implies \eqref{eq:eta} for $i\leq n-1$. Moreover, we also implement the event-triggering mechanism for $s_n$. By solving the following optimization problem:
\begin{problem} \label{problem:relax-problem-1}
\begin{align}
& \underset{\theta\in\Theta_\ddagger}{\textit{minimize}}
& & J(\theta), & \textit{subject to}& &\sum_{i\in\mathcal S}\E{\gamma_i(k)} \leq 1,\label{eq:12}
\end{align}
\end{problem}
it is not hard to see the solution to Problem \ref{problem:relax-problem-1} is better than the optimal $\theta_+\in\Theta_+$ since any absence of each sensor now conveys some information to the estimator. While  for $\theta_+\in\Theta_+$ the information is lost for the sensors whose priority is behind the designated one. Also note that the queue is useless for $\theta\in\Theta_\ddagger$.

Now $\alpha_i(k)$ solely depends on $\tau_i(k)$ and so does $\beta_i(k)$. From \eqref{eq:3} we know that $J_i(\theta)$ is determined by the underlying stochastic process $\{\tau_i(k)\}_{k=1}^\infty$. With a little abuse of notations, we use $\beta_i^j$ to denote the probability of $\p{\eta_i(k)=0}$ for sensor $i$ with $\tau_i(k)=j$ and $\alpha_i^j$ accordingly for $i\leq n$.

We can quantify the performance of a suboptimal schedule by using the following result.
\begin{proposition}\label{prop:1}
Suppose $\theta_*\in\Theta_+$ to be an optimal solution of minimizing \eqref{eq:cost-function-total}. For any schedule $\theta_\bot\in\Theta_+$ (including any time-based schedules), define the optimality gap to be $\Delta(\theta_\bot):=J(\theta_\bot)-J(\theta_*)$. The following inequality holds:
\begin{align*}
\Delta(\theta_\bot) \leq J(\theta_\#)-J(\theta_*).
\end{align*}
The schedule $\theta_\#$ is characterized by $\{\beta_i^\ell\}_{i\in\mathcal S,\ell\in\mathbb Z_+}$ which is the solution to the following problem:
% \footnote{Note that we need to solve the problem over countable $\beta_i^\ell$'s. We can set an upper bound on $l$ for any $i$ in practice. Intuitively, $\beta_i^\ell$ with a large $l$ has negligible effects on the cost function since $\pi_i^\ell$ depends on the product of all $\beta_i^j\in[0,1]$, $j\leq l$.}
\begin{align*}
&\underset{\pi_i^0,\beta_i^\ell\in\mathbb R}{\textit{minimize}}
& & \sum_{i\in\mathcal S}\pi_i^0\tr{P_i^0}+\sum_{j=0}^{\infty}\pi_i^0(\prod_{\ell=0}^j\beta_i^\ell)\tr{P_i^{j+1}},\\
& \textit{subject to} & &\pi_i^0+\pi_i^0\sum_{j=0}^{\infty}\prod_{\ell=0}^j\beta_i^\ell\geq 1,~i\in\mathcal S, \\
& & &\sum_{i\in\mathcal S} \pi_i^0\leq 1,
\end{align*}
where
\begin{align}
P_i^j &:= \prod_{u=0}^{j-1}(\beta_i^u)^{2/r_i(u)}h_i^j(\OP_i)+\sum_{\ell=0}^{j-1}\left(1-(\beta_i^{j-1-\ell})^{2/r_i(j-1-\ell)}\right) \notag\\
&\times \left(\prod_{u=0}^{\ell-1}(\beta_i^{j-1-u})^{2/r_i(j-1-\ell)}\right)h_i^\ell(\OP_i),~~\forall i\in\mathcal S. \label{eq:pij}
\end{align}
\end{proposition}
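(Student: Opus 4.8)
The plan is to establish the proposition in two stages: first argue that the optimal value of Problem \ref{problem:relax-problem-1} over the relaxed class $\Theta_\ddagger$ is a lower bound for $J(\theta_*)$, and then show that this value is exactly the optimum of the stated program in the variables $\{\pi_i^0,\beta_i^\ell\}$. The gap bound then follows immediately from $J(\theta_\#)\le J(\theta_*)$ together with the definition of $\Delta(\theta_\bot)$.

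First I would justify the lower bound. As noted just before the proposition, forcing $\mu_i(k)=1$ for every $i$ and relaxing the per-slot constraint $\sum_i\gamma_i(k)=1$ to the average-rate constraint $\sum_i\E{\gamma_i(k)}\le1$ only enlarges the admissible class while only adding information at the estimator: whenever a sensor is silent, its silence now triggers \eqref{eq:eta} rather than signalling a blocked channel, so by Lemma \ref{lemma:supporting-lemma}(iii) the update $t_i(\cdot,\alpha)\le h_i(\cdot)$ replaces the pure prediction $h_i(\cdot)$. Consequently any feasible point of the original program admits a counterpart in $\Theta_\ddagger$ of no larger cost, whence $J(\theta_\#)\le J(\theta_*)$, which is the inequality feeding the gap bound.

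The substantive step is the reduction to the stated program. With $\mu_i\equiv1$ the schedule decouples across sensors, and by Theorem \ref{thm:optimalestimator} each $P_i(k)$ depends on the past only through the leave duration $\tau_i(k)$: a successful transmission (probability $\tilde{\beta}_i^{\,j}$ when $\tau_i=j$, by Theorem \ref{thm:pr}) resets $P_i$ to $\OP_i$, while a silent slot (probability $\beta_i^{\,j}$) applies $t_i(\cdot,\alpha_i^j)$ and increments $\tau_i$. Thus $\{\tau_i(k)\}$ is a Markov chain on the leave durations $\{0,1,2,\dots\}$ whose stationary law is $\pi_i^{\,j}=\pi_i^0\prod_{\ell=0}^{j-1}\beta_i^\ell$, and unrolling the affine recursion $P_i^{\,j}=t_i(P_i^{\,j-1},\alpha_i^{j-1})$ with $\widehat{\alpha}_i^{\,j}=(\beta_i^{\,j})^{2/r_i(j)}$ yields exactly the closed form \eqref{eq:pij}. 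I would then identify the long-run average cost \eqref{eq:cost-function-total} with the stationary average $\sum_i\sum_j\pi_i^{\,j}\tr{P_i^{\,j}}$, which after substituting $\pi_i^{\,j}$ is precisely the stated objective; the rate constraint becomes $\sum_i\pi_i^0\le1$ since $\pi_i^0=\E{\gamma_i}$, and the normalization $\sum_j\pi_i^{\,j}=1$ is written as $\ge1$. Because $\tr{P_i^{\,j}}$ is nondecreasing in $j$ by Lemma \ref{lemma:supporting-lemma}, no minimizer ever places excess mass, so the $\ge1$ relaxation is tight and leaves the optimal value unchanged while keeping the feasible region convenient.

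The main obstacle is making this last reduction rigorous: passing from the $\limsup$ time average in \eqref{eq:cost-function-total} to the stationary average requires the leave-duration chain to be positive recurrent with summable cost, i.e. $\sum_j\prod_{\ell=0}^{j-1}\beta_i^\ell<\infty$ and $\sum_j\pi_i^{\,j}\tr{P_i^{\,j}}<\infty$, failing which $J=\infty$ and the bound is vacuous; I would dispose of this by restricting attention to schedules of finite cost, where the geometric decay of $\pi_i^{\,j}$ dominates the exponential growth of $\tr{h_i^{\,j}(\OP_i)}$. A secondary care point is that forcing $\mu_i\equiv1$ alters each sensor's marginal transmission rate, so the lower-bound claim must be read at the level of the optimal values of the two programs rather than as a pathwise coupling; the inequality $J(\theta_\#)\le J(\theta_*)$ holds because $\Theta_\ddagger$ under $\sum_i\E{\gamma_i}\le1$ contains a schedule that matches the rates of $\theta_*$ yet never incurs the blocked-channel penalty.
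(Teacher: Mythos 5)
Your proposal is correct and follows essentially the same route as the paper's own proof: relax to $\Theta_\ddagger$ with the average-rate constraint so that $J(\theta_\#)\leq J(\theta_*)$, decouple across sensors, model the leave duration $\tau_i(k)$ as a Markov chain whose stationary law $\pi_i^j=\pi_i^0\prod_{\ell=0}^{j-1}\beta_i^\ell$ turns the average cost into the stated signomial program with $P_i^j$ obtained by unrolling $t_i(\cdot,\alpha_i^{j-1})$, and finally argue the relaxed normalization constraint is tight at the optimum. Your added care about positive recurrence and finite cost is also present (more tersely) in the paper's proof, so no substantive difference remains.
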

\begin{proof}In Problem \ref{problem:relax-problem-1}, the transmission of each sensor is not affected by others and we can first consider such a subproblem for one sensor:
\begin{align}
& \underset{\theta\in\Theta_\ddagger}{\textit{minimize}}
& &  J_i(\theta), & \textit{subject to}& &  \E{\gamma_i(k)} \leq \varrho_i. \label{eq:13}
\end{align}

The stochastic process $\{\tau_i(k)\}_{k=1}^\infty$ is a Markov chain with countable state space $\mathcal M:=\{0,1,2,\ldots\}$. The transition matrix $T_i$ is given by
\begin{align*}
T_i(j_2,j_1)=\left\{
               \begin{array}{ll}
                 1-\beta_i^{j_1}, & \hbox{if~} j_2=0\\
                 \beta_i^{j_1}, & \hbox{if~} j_2=j_1+1\\
                 0, & \hbox{otherwise}
               \end{array}
             \right.,
\end{align*}
where $T_i(j_2,j_1):=\p{\tau_i(k)=j_2|\tau_i(k)=j_1}$. Though we can design $\beta_i^j$ freely, we have to choose the set of $\{\beta_i^j\}_{j=0}^{\infty}$ to guarantee that the state $\tau_i(k)=0$ is \textit{recurrent}. Otherwise, the Markov chain $\{\tau_i(k)\}$ is transient and $J_i(\theta)$ will go unbounded due to Lemma \ref{lemma:supporting-lemma}(iii) and the fact that $\lim_{j\rightarrow \infty}\tr{h_i^j(\OP_i)}\rightarrow \infty$. Therefore, the stationary distribution $\pi_i=[\pi_i^0,\pi_i^1,\pi_i^2,\ldots]$ must exist \cite{meyn2012markov}. Denote the corresponding estimation error covariance for $\tau_i(k)$ as $P_i^{\tau_i(k)}$. From Theorem \ref{thm:optimalestimator} and \ref{thm:pr}(i) we have that $P_i^0 = \OP_i,P_i^{j} = t(P_i^{j-1},\alpha_i^{j-1})$ and thus $P_i^j$ in \eqref{eq:pij}. Now the cost function becomes $J_i(\theta)=\varrho_i\tr{P_i^0}+\sum_{j=0}^{\ell_{max}-1}\varrho_i(\prod_{\ell=0}^j\beta_i^\ell)\tr{P_i^{j+1}}$.

Note that the implicit equality constraint for Problem \ref{problem:relax-problem-1} is $\sum_j \pi_i^j =1$. Since the cost function $J_i(\theta)$ is a decreasing function of $\pi_i^0$, we can transform it into the inequality constraint and the optimal solution must be reached only when the equality holds. Therefore, by solving the optimazition problem in the proposition we can obtain a lower bound for the optimal cost and thus the optimality gap bound.
\end{proof}
Note that the optimization problem is also a signomial programming problem whose global optimum can be numerically solved \cite{chiang2005geometric,maranas1997global}.

\begin{remark}\label{remark:queue}
For a small network, we can enumerate the optimal or suboptimal schedules for different orders and choose the best order. However, it is formidable to do this for a large network. One heuristic technique is to assign a time-invariant priority queue in parameter design according to the optimal $\pi_i^0$ in Proposition \ref{prop:1}, \ie, letting the queue to be $\{s_{\phi_1},\ldots,s_{\phi_\ell},\ldots, s_{\phi_n}\}$,~where $\phi_\ell\in\mathcal S$, with $\pi_{\phi_\ell}^0\leq\pi_{\phi_{\ell+1}}^0$ since a larger $\pi_{\phi_\ell}^0$ usually implies that error covariance of $s_{\phi_\ell}$ may grow faster than others.
\end{remark}

%Under the time-invariant priority queue, note that the computational burden of the suboptimal schedule and the lower bound is scalable to the size of the network while the MDP method is subject to the curse of dimensionality.

\section{Example}\label{section:simulation}
\bb{In this section, we use a simple example to show the superiority of the proposed event-based schedule and compare event-based schedules with different parameter designs. To compare with the optimal time-based schedule in \cite{shi2012scheduling} which is only applicable to the two-sensor case, we also conduct simulations over two processes with the parameters: $A_1=\begin{bmatrix}2 & 1\\ 0& 1\end{bmatrix},~C_1 = \begin{bmatrix}1\\ 2\end{bmatrix}\T,~Q_1=\begin{bmatrix}1 & 0\\ 0& 1\end{bmatrix},~R_1=1$
 and $A_2=\begin{bmatrix}1.1 & 1\\ 0& 1\end{bmatrix},~C_2 = \begin{bmatrix}1\\ 1\end{bmatrix}\T,~Q_2=\begin{bmatrix}3 & 0\\ 0& 3\end{bmatrix},~R_2=1$.}

We compare the performance among the following schedules:
\begin{itemize}
\item $\theta_{\textit{offline}}$: the optimal time-based schedule\cite{shi2012scheduling} is $\{s_2,s_1,s_1,s_2,s_1,s_1,\ldots\}$ with the period $3$.
\item $\theta_{\textit{greedy}}$: the suboptimal schedule via the Greedy algorithm.
\item $\theta_{\textit{MDP}}$: the approximate optimal event-based schedule via MDP approach. By discretizing the continuous state and action space and rebuilding the transition law, we approximate the original MDP problem by a finite state discrete MDP problem and solve it by value iteration. The number of state grids and action grids is $1000$ and $100$, respectively.
\end{itemize}

\begin{table}
\centering
 \begin{tabular}{ccccccc}
 \toprule
   Schedules and LB &$\theta_{\textit{offline}}$ &$\theta_{\textit{greedy}}$ &$\theta_{\textit{MDP}}$ &LB\\
   $J(\cdot)$&$ 92.64$ & $52.05$ &$55.23$ & $48.21$\\ \bottomrule
  \end{tabular}
  \caption{Performance comparison among different schedules. The lower bound of the optimal $J$ is also listed as LB.}\label{table:1}
  %\vspace{-5mm}
\end{table}

% 1.8062

We also compute the lower bound for the optimal event-based schedule according to Proposition \ref{prop:1}.

\bb{We summarize the results in Table \ref{table:1}. For performance of each schedule, we repeatedly run $500$ experiments and take their average. The estimation performance of $\theta_{\textit{offline}}$ is the worst among all. The two event-based schedules reduce the cost function $J(\theta_{\textit{offline}})$ by $43.8\%$ and $40.4\%$, respectively. The approximate solution to the MDP does not outperform the solutions given by the greedy algorithm in this case due to the discretization approximation. Moreover, the gap between $J(\theta_{\textit{greedy}})$ and LB is small, implying that the suboptimal schedule via the greedy algorithm is a good choice with low computational complexity compared with the MDP-based optimal schedule. The computational time for $\theta_{\textit{greedy}}$ at each time and the lower bound are $3.3$s and $25.1$s, respectively\footnote{All simulations are conducted on MacBook with a processor of $1.3$GHz Intel Core i5 and a memory of $4$GB DDR3.}.}

%\setlength{\figureheight}{2.3cm}
%\setlength{\figurewidth}{6.5cm}
%\begin{figure}
%  \centering
%  %\inputtikz{realization}
%  % Requires \usepackage{graphicx}
%  \includegraphics[width= 3.7 in, height = 2.5in]{fig.png}
%  \caption{Realization of the event-based schedule $\theta_{\textit{greedy}}$ (upper figure) and realization of $\abs{\epsilon_1(k)}$ (lower figure).} \label{fig:2}
%  %\vspace{-5mm}
%\end{figure}

\bb{We also plot the realization of $\theta_{greedy}$ to illustrate the superiority of the event-based mechanism over the time-based one. At time $k=11$, Sensor $2$ has the first priority of using the channel in Fig. \ref{fig:2}(b). This reflects the validity of Theorem \ref{thm:queue} from the fact $\tr{h_2(P_2(k))}-\OP_2> \tr{h_1(P_1(k))}-\OP_1$ seen from Fig. \ref{fig:2}(a). In Fig. \ref{fig:2}(c), however, Sensor $1$ is scheduled to transmit its data because the event-triggering mechanism of Sensor $2$ classifies its data as unimportant and leaves the channel access to Sensor $1$. The design of queue and event-triggering thus allocates the channel access more efficiently than the offline schedule does.}

\setlength\figureheight{2.6cm}
\setlength\figurewidth{8cm}
\begin{figure}
  \centering
  \input{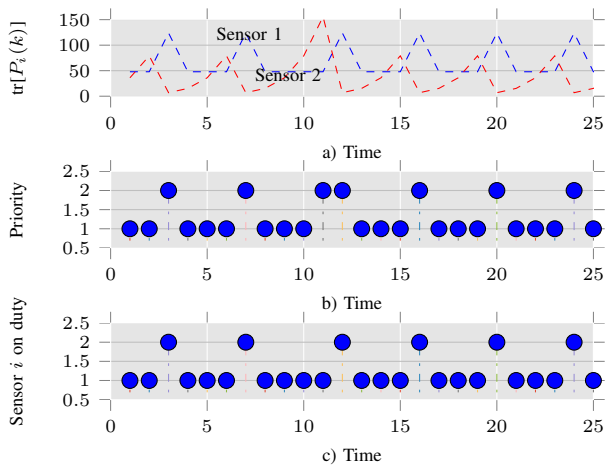}
  \caption{Realization of the event-based schedule $\theta_{\textit{greedy}}$: a) $\tr{P_i(k)}$ of each sensor, b) y-axis denotes the sensor index with the higher priority, c) y-axis denotes the on-duty sensor index.} \label{fig:2}
  %\vspace{-5mm}
\end{figure}

\section{Conclusion}\label{section:conclusion}
We have studied a stochastic sensor scheduling framework for sensor networks monitoring different processes. The sensors make transmission decisions based on both channel accessibility and self-triggering events depending on the realtime innovations. The proposed schedule which dynamically allocates the limited communication bandwidth to sensors is shown to be more efficient than the time-based schedules in terms of average estimation performance. We have also discussed the optimal event-based schedule design through an MDP approach. Since solving an MDP problem in Borel spaces is generally computationally difficult, we have also proposed a suboptimal schedule via the greedy algorithm and analyzed the optimality gap. Future works include the design of event-based scheduling over different communication topologies and conditions, \ie, different graphs or imperfect channels.

\bibliographystyle{IEEETran}
\bibliography{csma_references}

\end{document}